\documentclass[11pt]{article}
\usepackage{amsmath,amssymb,latexsym,amsthm,mathtools}
\usepackage[all]{xy}
\usepackage{float}
\usepackage{graphicx,epstopdf,url}
\usepackage{amsfonts}
\usepackage{appendix}
\usepackage{xcolor}
\usepackage[hidelinks]{hyperref}
\usepackage{booktabs}
\usepackage{multirow}
\usepackage{natbib}
\usepackage{bm}
\usepackage{algorithm}
\usepackage{algorithmic}
\newtheorem{theorem}{Theorem}

\usepackage[left]{lineno}
\usepackage[margin=1in]{geometry}
\usepackage{marginnote} \usepackage{marginfix}
\newcommand{\cb}[1]{{\color{black}#1}}
\newcommand{\mbh}[1]{{\color{black}#1}}

\title{Dynamical Survival Analysis for Modeling Hazard Functions with Nonlinear Systems} 

\author{Dananjani Liyanage
\thanks{Department of Mathematics and Statistics, University of Minnesota Duluth, dmadiwal@d.umn.edu} \and
    Mahmudul Bari Hridoy\thanks{Department of Mathematics, Virginia Tech, barihridoy@vt.edu} \and
    Fahad Mostafa\thanks{School of Mathematical and Natural Sciences, and Julie Ann Wrigley Global Futures Laboratory, Arizona State University, fahad.mostafa@asu.edu}
    \thanks{Corresponding author's email: \texttt{fahad.mostafa@asu.edu/fahadraj.du@gmail.com}}
}

\date{}
\begin{document}

\maketitle
\begin{abstract}
Hazard functions play a central role in survival analysis, providing insight into the underlying risk dynamics of time-to-event data, with broad applications in medicine, epidemiology, and related fields. First-order ordinary differential equation (ODE) formulations of the hazard function have been explored as extensions beyond classical parametric models. However, such approaches typically produce monotonic hazard patterns, limiting their ability to represent oscillatory behavior, nonlinear damping, or coupled growth–decay dynamics. We propose a \cb{general} statistical framework for modeling and simulating hazard functions governed by higher-order ODEs, allowing the hazard to depend on both its current level, its rate of change, and time. \cb{This formulation accommodates} complex temporal risk behaviors arising in a range of applications. \cb{Building on this framework, we develop a class of nonlinear and oscillatory hazard models, each associated with an interpretable dynamical mechanism and an induced survival distribution.} We also present a simulation procedure for solving a system of non-linear higher-order ODEs, with failure times generated via cumulative hazard inversion. Likelihood-based Bayesian inference under right censoring is also developed, and moment generating function analysis is used to characterize tail behavior. \mbh{The proposed framework is evaluated through simulation studies and illustrated using real data, demonstrating its ability to capture temporal risk patterns not well represented by standard monotone models. In contrast to existing linear ODE-based hazard models, the proposed approach accommodates nonlinear and non-equilibrium dynamics, enabling the representation of temporal risk patterns that are not well captured by first-order or linear oscillator-based formulations.}

%\cb{demonstrating its ability to capture temporal risk patterns not well represented by standard monotone models.} \mbh{Unlike existing linear ODE-based hazard models, the proposed framework accommodates nonlinear and non-equilibrium dynamics, enabling the representation of temporal risk patterns that cannot be captured by first-order or linear oscillator-based formulations.}
\end{abstract}

%%%%%%%%%%%%%%%%%%%%%%%%%%%%%%%%%%%%%%%%%%
\medskip
\noindent \textbf{Keywords} Survival Analysis, Non-linear ODEs, Inference, Dynamical Systems\\
\medskip
\textbf{AMS Classification}  62N99; 62P10; 62P99; 62P35

%%%%%%%%%%%%%%%%%%%%%%%%%%%%%%%%%%%%%%%%%%

\section{Introduction}

Survival Analysis is a statistical framework for modeling and analyzing time-to-event data, focusing on the timing of specific events of interest, such as death, system failure, disease progression, or recovery. Classical methods address practical complications such as right censoring and truncation, which arise when complete event information is not available. The core components include the survival function, which estimates the probability of surviving beyond a given time, and the hazard function, which characterizes the instantaneous risk of event occurrence. In particular, the hazard function offers a dynamic perspective on risk evolution over time, which makes it especially suitable for modeling through differential equations.

The historical roots of survival analysis trace back to early life tables, beginning with John Graunt’s pioneering work in 1662 and further advanced by Edmund Halley’s probabilistic life table in 1693 \cite{camilleri2019history}. Traditional survival models typically employ parametric approaches, assuming that survival times follow predefined probability distributions such as Weibull, Exponential, or Log-Normal \cite{taketomi2022parametric}. These methods impose rigid functional forms on survival and hazard functions, leading to limitations in capturing complex survival dynamics \cite{taketomi2022parametric, cordeiro2014exponential}. While effective for hazard functions with constant or monotonic rates, parametric models struggle with time-varying covariates, feedback mechanisms, and model misspecification \cite{cox2007parametric}. In contrast, non-parametric approaches, including the Kaplan-Meier estimator \cite{kaplan1958nonparametric}, which estimates the survival function, and the Nelson-Aalen estimator, which estimates the cumulative hazard function, derive survival characteristics directly from observed data without assuming explicit distributions. While these methods provide data-driven insights, they lack interpretability regarding the underlying mechanisms driving hazard rate changes and are unable to extrapolate beyond the observed time horizon \cite{janurova2014nonparametric,kenah2013non}. Semi-parametric models, such as the Cox proportional hazards model \cite{cox1972regression}, offer a middle ground by allowing covariate effects to be estimated flexibly while leaving the baseline hazard function unspecified. Thus, semi-parametric models are advantageous when the baseline hazard function is unknown but impose other restrictions such as proportional hazards, which may not hold in settings where risk evolves over time. These classical methods thus face limitations in their inability to explicitly model the underlying dynamic mechanisms governing the evolution of the hazard function.

Many real-world survival processes, however, evolve as dynamical systems, where changes in risk depend not only on the current hazard level but also on its rate of change and the surrounding environment. Examples include biological growth and decay processes, disease progression and remission cycles, reliability degradation in engineered systems, and market recovery or failure in economics. Such processes are more naturally described through differential equations than static statistical relationships, motivating the development of dynamical survival analysis. In recent years, ODEs have gained prominence in survival analysis for modeling time-dependent processes. For example, Christen and Rubio \cite{christen2024dynamic} introduced systems of ODEs for hazard modeling, shifting attention from static to dynamic \mbh{representations of risk. In their framework, first-order ODE systems link statistical and mechanistic perspectives by allowing the hazard to evolve over time; however, the rate of change depends only on the current hazard level and time, resulting in inherently Markovian dynamics that limit the ability to represent complex temporal behaviors such as oscillations, delayed feedback, or nonlinear damping \cite{liu2012survival}.} These features are particularly relevant in settings such as remission-relapse disease dynamics, fatigue-driven reliability, and recovery from market or environmental shocks, where risk does not merely rise or fall monotonically.  

%\cb{Recent work \cite{christen2024harmonic} introduced a second-order ODE-based hazard model. Nevertheless, a general framework for systematically studying and estimating higher-order dynamical hazard models remains underdeveloped.}

\mbh{More recently, Christen and Rubio \cite{christen2024harmonic} proposed a second-order hazard model based on the linear damped harmonic oscillator. While this formulation extends the dynamical perspective, it corresponds to a specific parametric model with constant coefficients and closed-form solutions, and its behavior is governed by linear dynamics that converge to equilibrium. In contrast, the framework proposed in this paper is not restricted to a single dynamical system, but instead defines a general class of hazard models governed by nonlinear and potentially non-autonomous higher-order differential equations. This broader formulation allows the hazard dynamics to depend on both the current state and its rate of change through nonlinear interactions, enabling behaviors such as state-dependent oscillations, transient amplification, and non-equilibrium dynamics that cannot be generated by linear oscillator-based models. Thus, our proposed approach can be viewed as a strict generalization of linear oscillator-based hazard models, in which the latter arise as a special case under linear and constant-coefficient dynamics.
}

\mbh{Motivated by these considerations, we develop a broader} \cb{dynamical survival framework by incorporating higher-order ODE representations for the hazard function.} By allowing hazard dynamics to depend on both the current hazard level and its rate of change, the proposed framework captures feedback, inertia, and adaptive responses that are inaccessible to first-order models.  \cb{Within this framework, we develop a class of nonlinear and oscillatory hazard models, including an extension of the damped oscillatory hazard model, a second-order logistic hazard model derived from classical logistic dynamics, as well as sinusoidal and interaction-moderated exponential hazards, each linking an interpretable dynamical mechanism to an induced survival distribution.} \mbh{In addition, we \cb{present} a general numerical framework for solving higher-order hazard systems and simulating event times, which applies uniformly across the models considered. Likelihood-based inference under right censoring is developed, and the long-run behavior of the resulting hazard processes is characterized. Through simulation studies and a real-data application, we demonstrate that higher-order hazard dynamics can capture temporal risk patterns that are not well represented by standard monotone or first-order models. This added flexibility is particularly important in applications where risk evolves through nonlinear feedback, delayed adjustment, or cyclical mechanisms, where} \cb{linear and first-order hazard models, along with the oscillatory hazard model introduced in \cite{christen2024harmonic}, may be inadequate.}

%\mbh{In addition, we \cb{provide} a general numerical framework for solving higher-order hazard systems and simulating event times, applicable uniformly across the models considered. We develop likelihood-based inference under right censoring and characterize the long-run behavior of the resulting hazard processes. Simulation studies and a real-data application illustrate that higher-order hazard dynamics capture temporal risk patterns not well represented by standard monotone or first-order models. This flexibility is particularly important in applications where risk evolves through nonlinear feedback, delayed adjustment, or cyclical mechanisms, for which} \cb{linear and first-order hazard models, along with the oscillatory hazard model introduced in \cite{christen2024harmonic}, may be inadequate.}

The remainder of the paper is organized as follows. We begin by presenting the 
ODE-based modeling framework and briefly reviewing the first-order formulation 
of hazard dynamics. We then extend this framework to hazard functions governed 
by higher-order ODEs in Section~\ref{sec:method}. The probability distribution  induced by ODE-based hazards is discussed in Section~\ref{sec:prob_dist}. 
Section~\ref{sec:non-linear} presents nonlinear and oscillatory hazard models. 
Numerical studies and statistical inference are discussed in Sections~\ref{sec:numerical_study} and \ref{sec:inference}, respectively. 
Finally, Section~\ref{sec:simulation} presents simulation results and a 
real-data application in Section~\ref{sec:real_data}.

%%%%%%%%%%%%%%%%%%%%%%%%%%%%%%%%%%%%%%%%%%
\section{ODE-based Hazard Modeling}\label{sec:method}

\cb{This section outlines the ODE-based framework for modeling hazard dynamics used in this work.} We begin by revisiting the first-order ODE formulation, then \mbh{develop higher-order systems that define a broader class of hazard dynamics.} The approach treats the hazard function and its cumulative form as components of a coupled dynamical system, providing a structured foundation for simulation and inference in later sections. \mbh{In particular, the higher-order formulation allows for nonlinear and state-dependent evolution of the hazard, extending beyond the linear and closed-form models considered in existing ODE-based approaches.}

%%%%%%%%%%%%%%%%%%%%%%%%%%%%%%%%%%%%%%%%%%

\subsection{First-Order ODE Framework for Hazard Dynamics} 

We consider a cohort of $n$ subjects with latent event times $\{o_i\}_{i=1}^n$ and right-censoring times $\{c_i\}_{i=1}^n$. In practice, only the earlier of these two times is observed, defined as \( t_i = \min(o_i, c_i) \), along with a censoring indicator $\delta_i$, given by:

\begin{equation}
    \delta_i = 
\begin{cases}
1, & o_i \leq c_i, \\[6pt]
0, & \text{otherwise}.
\end{cases}
\end{equation}

Each \(\{o_i\}\) is assumed to follow a continuous distribution characterized by the following interrelated quantities: a probability density function \(f(t)\), a cumulative distribution function \(F(t) = \int_0^t f(m) \, \mathrm{d}m\), and a survival function \(S(t) = 1 - F(t)\). The hazard function, which describes the instantaneous rate of event occurrence conditional on survival up to time $t$, is defined as:

\begin{equation}
h(t) = -\frac{S'(t)}{S(t)},
\end{equation}

\noindent and the cumulative hazard function representing the total accumulated risk up to time \(t\) is given by:

\begin{equation}
H(t) = \int_0^t h(m) \, \mathrm{d}m = -\log S(t).
\end{equation}

To model the temporal evolution of the hazard function explicitly within a dynamical framework, we define a system of first-order ODEs. The instantaneous hazard $h(t)$ is embedded as one component of a vector of dynamically evolving state variables,

\begin{equation}
\mathbf{Y}(t) = \big(h(t), q_1(t), \ldots, q_m(t)\big)^\top\quad, t > 0, 
\end{equation}

\noindent where \(q_j: \mathbb{R}^+ \to \mathbb{R}\) (\(j = 1, \ldots, m\)) denotes an auxiliary differentiable function capturing additional latent temporal dynamics in the hazard process. The resulting coupled first-order ODE system governing \(\mathbf{Y}(t)\) and the cumulative hazard \(H(t)\) is then given by:

\begin{equation}
\begin{aligned}
\mathbf{Y}'(t) &= \psi_\theta(\mathbf{Y}(t), t), \\
H'(t) &= Y_1(t),
\end{aligned}
\qquad
\mathbf{Y}(0)=\mathbf{Y}_0,\quad H(0)=0.
\end{equation}

\noindent where \(\psi_\theta: \mathbb{R}^{m+1} \times \mathbb{R}^{+} \rightarrow \mathbb{R}^{m+1}\) is a parameterized vector field capturing the dynamic interactions among hazard function components and auxiliary latent processes. The first component, \(Y_1(t)\), explicitly represents the instantaneous hazard \(h(t)\), ensuring that the cumulative hazard \(H(t)\) is consistently and accurately derived from the system. This formulation naturally captures temporal dependencies, nonlinear behavior, and covariate effects in survival dynamics (see, e.g., \cite{tang2023survival}).

\paragraph{Positivity, Stability, and Structural Constraints:} Ensuring the nonnegativity of \( h(t) \) is essential, since the hazard function is nonnegative by definition. The ODE framework provides several mechanisms for enforcing this property. 
(a) The vector field \(\psi_\theta\) can be constructed so that the hazard dynamics preserve nonnegativity, for example by restricting the domain of the system or imposing suitable structural constraints. 
(b) Alternatively, a transformation such as \( \tilde{h}(t) = \log h(t) \) may be used, which guarantees positivity of \(h(t)\) while allowing unconstrained evolution of \(\tilde{h}(t)\). 
(c) Finally, initializing the system with \( h(0) = h_0 \ge 0 \) and enforcing appropriate stability conditions on \(\psi_\theta\) ensures that nonnegativity is preserved over time.

The inclusion of auxiliary latent states \(q_j(t)\) enables the model to capture time-varying covariate effects and latent dependencies in the hazard process, extending the framework to complex survival dynamics. Simultaneous modeling of \(h(t)\) and its cumulative counterpart \(H(t)\) further ensures internal consistency and facilitates efficient numerical inference. Together, these properties make the ODE framework well suited for modeling complex temporal hazard dynamics, while maintaining key probabilistic constraints \cite{tang2023survival, christen2024dynamic}.

\medskip

\mbh{Despite its flexibility, the first-order formulation remains fundamentally limited in its dynamical structure. Since the evolution of the hazard depends only on its current state and time, the model cannot explicitly capture inertia, acceleration, or delayed feedback. To capture these non-linear dynamics without introducing additional latent states, we need to transition to a higher-order ODE framework for hazard dynamics.}

%%%%%%%%%%%%%%%%%%%%%%%%%%%%%%%%%%%%%%%%
\subsection{Higher-Order ODE Frameworks for Hazard Dynamics}

We now \mbh{develop a higher-order} ODE formulation of hazard dynamics, focusing on the second-order case \mbh{as a fundamental extension of the first-order framework.} Second-order systems introduce an additional degree of freedom that allows the hazard’s rate of change itself to evolve dynamically, \mbh{introducing intrinsic memory and inertia into the system. This results in a qualitatively richer class of models capable of representing feedback-driven and non-monotonic risk evolution.} A graphical abstract of the framework is presented in Figure~\ref{fig:higher_order_ode_survival}.

The hazard function $h(t)$ is modeled as part of a second-order system, with dynamics described by:
\begin{equation}
    h''(t) = \phi_\theta(h(t), h'(t), t), \quad t\geq 0,
\end{equation}   

\noindent where $\phi_\theta : \mathbb{R}^3 \to \mathbb{R}$ is a continuously differentiable function parameterized by $\theta \in \Theta \subseteq \mathbb{R}^p$. The function $\phi_\theta$ specifies how the acceleration of the hazard depends on its current value, its instantaneous rate of change, and time. \mbh{This formulation generalizes linear second-order hazard models by allowing $\phi_\theta$ to be nonlinear and state-dependent. In contrast to linear oscillator-based systems with constant coefficients \cite{christen2024harmonic}, this approach facilitates a broader range of non-equilibrium hazard behaviors.}

To handle second-order dynamics, the system is conveniently reformulated as a first-order system by introducing an auxiliary variable $v(t) = h'(t)$: 

\begin{equation}\label{eq:state_space_second_order}
\begin{cases}
h'(t) = v(t),\\[4pt]
v'(t) = \phi_\theta(h(t), v(t), t).
\end{cases}
\end{equation}
Define the state vector
\[
\mathbf{Y}(t) =
\begin{pmatrix}
h(t)\\[4pt] v(t)
\end{pmatrix},
\qquad
\mathbf{Y}'(t) =
\Psi_\theta(\mathbf{Y}(t), t)
=
\begin{pmatrix}
v(t)\\[4pt] \phi_\theta(h(t), v(t), t)
\end{pmatrix}.
\]
\mbh{which yields a compact state-space representation that facilitates numerical implementation while preserving the richer dynamics of the higher-order formulation.} Second-order systems can naturally capture oscillatory, damped, or feedback-driven behaviors (see, e.g., \cite{rabinovich2012oscillations}), which may be relevant in survival scenarios involving cyclical risks, delayed responses, or adaptive mechanisms. These dynamics enable more realistic representations of complex hazard evolution over time.

\mbh{Importantly, while any second-order system can be rewritten as a first-order system by augmenting the state vector, this transformation alters the interpretation of the model by introducing latent variables rather than explicitly modeling the rate of change of the hazard. By maintaining a higher-order formulation, we preserve the direct interpretability of acceleration and feedback mechanisms—elements that are structurally essential to our proposed modeling approach.}

\begin{figure}[H]
    \centering
    \includegraphics[width=0.7\textwidth]{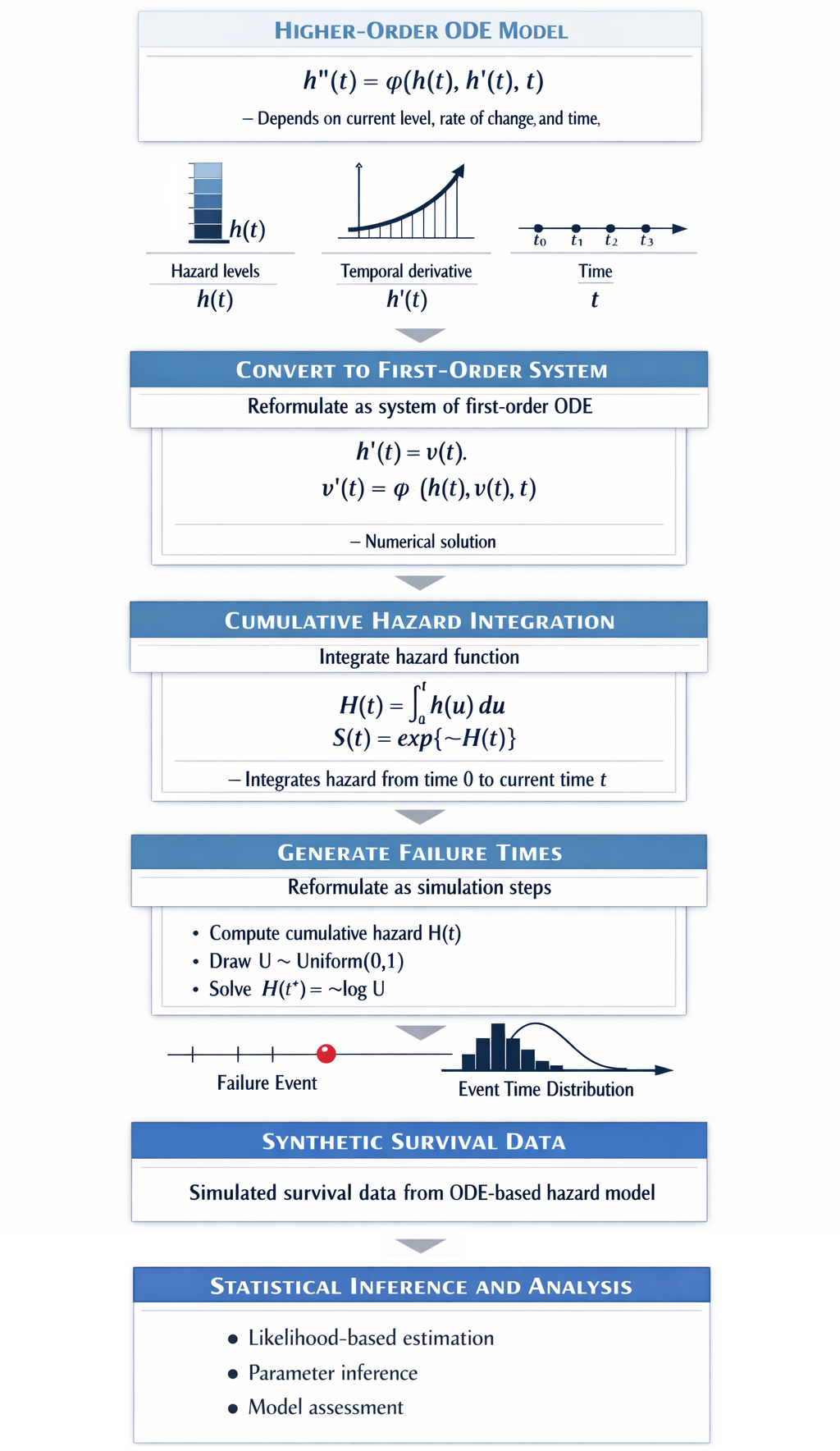}
    \caption{\small{Flowchart illustrating the simulation framework for survival models driven by higher-order ODE-based hazard dynamics. The hazard function is specified via a higher-order ODE and reformulated as a system of first-order equations for numerical solution. The resulting hazard trajectory is integrated to obtain the cumulative hazard function, which is then used to generate event times via inverse transform sampling. This procedure yields synthetic survival data, which can subsequently be used for likelihood-based inference, parameter estimation, and model assessment.}}
    \label{fig:higher_order_ode_survival}
\end{figure}

%%%%%%%%%%%%%%%%%%%%%%%%%%%%%%%%%%%%%%%%%%
\begin{theorem}(Existence, Uniqueness, and Stability of Higher-Order Hazard Dynamics)

\noindent Let the hazard function \( h(t) \) be governed by a second-order ODE of the form:
\[
h''(t) = \phi(h(t), h'(t), t),
\]
where \( \phi: \mathbb{R}^3 \to \mathbb{R} \) is a continuous function, and the initial conditions are specified as \( h(0) = h_0 \) and \( h'(0) = v_0 \). Assume the following:
\begin{enumerate}
    \item \(\phi(h, h', t)\) is Lipschitz continuous in \(h\) and \(h'\), i.e., there exists a constant \(L > 0\) such that:
    \[
    |\phi(h_1, v_1, t) - \phi(h_2, v_2, t)| \leq L(|h_1 - h_2| + |v_1 - v_2|), \quad \forall h_1, h_2, v_1, v_2 \in \mathbb{R}, \, t \geq 0.
    \]
    \item \(\phi(h, h', t)\) is bounded for all \(t \geq 0\), \(h \in \mathbb{R}\), and \(h' \in \mathbb{R}\).
\end{enumerate}
Then, there exists a unique solution \( h(t) \) defined on an interval \([0, T]\) for the given ODE and initial conditions. For stability, if \( \phi(h, h', t) \) is continuously differentiable in \( h \) and \( h' \), and if all eigenvalues of the Jacobian matrix of the system:
    \[
    J = \begin{bmatrix}
    0 & 1 \\
    \dfrac{\partial \phi}{\partial h} & \dfrac{\partial \phi}{\partial h'}
    \end{bmatrix}
    \]

\noindent have negative real parts, then the equilibrium $(h^*,0)$ is locally asymptotically stable, and $h(t)\to h^*$ as $t\to\infty$ for all solutions starting sufficiently close to $(h^*,0)$.
\end{theorem}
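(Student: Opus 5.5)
The plan is to reduce everything to the first-order state-space form \eqref{eq:state_space_second_order}. Write $\mathbf{Y}=(h,v)^\top$ and $\mathbf{Y}'=\Psi(\mathbf{Y},t)=(v,\phi(h,v,t))^\top$ with $\mathbf{Y}(0)=(h_0,v_0)^\top$.

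For existence and uniqueness, we first check that $\Psi$ is globally Lipschitz in $\mathbf{Y}$, uniformly in $t$. Using the $\ell^1$ norm and assumption 1,
\[
\|\Psi(\mathbf{Y}_1,t)-\Psi(\mathbf{Y}_2,t)\|_1\le |v_1-v_2|+L(|h_1-h_2|+|v_1-v_2|)\le (1+L)\|\mathbf{Y}_1-\mathbf{Y}_2\|_1 .
\]
Continuity of $\phi$ makes $\Psi$ jointly continuous. We then apply the Picard--Lindel\"of theorem: the Picard map $\mathbf{Y}\mapsto \mathbf{Y}_0+\int_0^t\Psi(\mathbf{Y}(s),s)\,ds$ acts on $C([0,T];\mathbb{R}^2)$. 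Equip this space with the weighted sup norm $\sup_t e^{-2(1+L)t}\|\mathbf{Y}(t)\|_1$; in that norm the map is a contraction for every $T>0$. This gives a unique solution on any $[0,T]$, not just a short interval.

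Assumption 2 is not needed for this, but it gives an a priori bound. If $|\phi|\le M$, then $|v(t)|\le |v_0|+Mt$ and $|h(t)|\le |h_0|+|v_0|t+Mt^2/2$, so there is no blow-up. We will mention this as an alternative route via Peano existence plus continuation. The first component $h$ is then the unique $C^2$ solution of the second-order problem. Hence $H(t)=\int_0^t h$ is well defined as well.

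For stability, we interpret the hypothesis at an equilibrium $(h^*,0)$ of an autonomous system, i.e.\ $\phi=\phi(h,v)$ with $\phi(h^*,0)=0$. If $\phi$ depends on $t$, we would need to add uniformity in $t$, which we will state explicitly. The Jacobian of $\Psi$ at $(h^*,0)$ is exactly $J$, and by hypothesis its eigenvalues have negative real parts. Concretely, with $a=\partial_h\phi$ and $b=\partial_v\phi$, the characteristic polynomial is $\lambda^2-b\lambda-a$, so the condition is equivalent to $b<0$ and $a<0$ by Routh--Hurwitz.

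We then invoke Lyapunov's indirect method. Solve $J^\top P+PJ=-I$ for a symmetric positive definite $P$, which exists because $J$ is Hurwitz. Set $V(\mathbf{z})=\mathbf{z}^\top P\mathbf{z}$ with $\mathbf{z}=\mathbf{Y}-(h^*,0)^\top$. By differentiability, $\Psi(\mathbf{Y})=J\mathbf{z}+r(\mathbf{z})$ with $\|r(\mathbf{z})\|=o(\|\mathbf{z}\|)$. Along trajectories this gives $\dot V\le -\|\mathbf{z}\|^2+2\|P\|\|\mathbf{z}\|\|r(\mathbf{z})\|\le -\tfrac12\|\mathbf{z}\|^2$ on a small ball. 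Consequently $V$ decays exponentially there, so sublevel sets inside the ball are invariant and $\mathbf{z}(t)\to0$. In particular $h(t)\to h^*$ and $h'(t)\to0$.

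The main obstacle is less technical than interpretive. The statement's stability clause is only meaningful when $\phi$ is autonomous, or in the non-autonomous case when $\phi(h^*,0,t)=0$ for all $t$ and $J(t)=J$ is fixed or uniformly Hurwitz near the equilibrium. In the non-autonomous setting, pointwise Hurwitz eigenvalues alone do not suffice, so we would assume a constant $J$ plus a remainder that is $o(\|\mathbf{z}\|)$ uniformly in $t$. The Lyapunov argument above then goes through unchanged.
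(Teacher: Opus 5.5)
The paper gives no proof of this theorem. It is stated without proof, and the only proof in the appendix (Appendix~\ref{app:osc-proof}) is for Theorem~\ref{thm:osc.longterm}. So there is no argument of the authors' to compare yours against. Your proposal is a correct, standard proof of what the statement presumably intends.

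\textbf{Existence and uniqueness.} After reducing to \eqref{eq:state_space_second_order}, your Lipschitz constant $1+L$ and the contraction in the weighted (Bielecki) sup norm, with factor $1/2$, give a unique solution on every $[0,T]$. You also rightly note that the boundedness hypothesis is superfluous. This matters, because none of the paper's models satisfies it (e.g.\ $\phi=-\alpha v-\beta h+\gamma$ is unbounded), while the linear ones are globally Lipschitz.

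\textbf{Stability.} You correctly supply what the statement omits: the equilibrium condition $\phi(h^*,0)=0$, evaluation of the Jacobian at $(h^*,0)$, and autonomy (or a $t$-independent linearization whose remainder is $o(\|\mathbf{z}\|)$ uniformly in $t$). With these, your Lyapunov-equation argument goes through as written.

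\textbf{Relation to the paper.} The authors' proof of Theorem~\ref{thm:osc.longterm} is the linear special case of your route. There the remainder vanishes, and your Routh--Hurwitz condition $a<0$, $b<0$ becomes $\beta>0$, $\alpha>0$ (since $a=-\beta$, $b=-\alpha$). Explicit characteristic-root solutions then make convergence global rather than merely local.

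\textbf{One correction.} In your last paragraph, a merely ``uniformly Hurwitz'' $J(t)$ does not suffice in the non-autonomous case either. Take the damped Mathieu equation $h''+\varepsilon h'+(1+\mu\cos 2t)h=0$ with, e.g., $\varepsilon=0.1$ and $\mu=0.5$. Its frozen-time Jacobians have eigenvalues with real part $-\varepsilon/2$ for every $t$, yet it has exponentially growing solutions through parametric resonance. Keep only the constant-$J$ version you settle on.
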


%%%%%%%%%%%%%%%%%%%%%%%%%%%%%%%%%%%%%%%%
%%%%%%%%%%%%%%%%%%%%%%%%%%%%%%%%%%%%%%%%
\section{Probability Distribution Induced by First and Second-Order ODEs}\label{sec:prob_dist}

In this section, we demonstrate how conventional survival distributions can be interpreted as particular solutions to first- or higher-order ODE systems. Hazard functions \( h(t) \) can be naturally expressed as solutions to ODEs. In fact, any continuously differentiable hazard function trivially satisfies the first-order ODE:

\[
h'(t) = \psi(h, t),
\]

\noindent Therefore, a wide class of survival models, both parametric and nonparametric, can be viewed as solutions to dynamical systems characterized by suitable choices of $\psi$ or its higher-order analogues \cite{tang2023survival, jiang2020modeling, okagbue2017classes}. \mbh{Within this perspective, the proposed higher-order framework differs in that it explicitly models the dynamical mechanisms governing hazard evolution, rather than treating them as implicit consequences of a predefined functional form.} Below, we explore representations for both first-order and second-order ODEs, highlighting their properties. 

\paragraph{First-Order ODE Representations:}

We now illustrate how standard survival distributions arise as specific solutions of first-order ODEs. For example, consider the Bernoulli-type differential equation:
\begin{equation}\label{eq:hazard_riccati}
h'(t)=a(t)h(t)-h(t)^2,
\end{equation}
\noindent where \( a(t) = f'(t)/f(t) = (\mathrm{d}/\mathrm{d}t) \log f(t) \), and \( f(t) \) is the probability density function (pdf) of the corresponding survival model. This ODE can also be interpreted as a Riccati-type equation \cite{sugie2017nonoscillation}. The coefficient \( a(t) \), referred to as the \textit{autonomy coefficient,} determines whether the ODE is autonomous or nonautonomous: (a) If \( a(t) \) depends only on \( h(t) \), the ODE is autonomous, (b) If \( a(t) \) explicitly depends on \( t \), the ODE is non-autonomous. First-order ODEs with constant or monotonic $a(t)$ often yield hazard functions corresponding to classical monotonic survival distributions. In particular, monotonic hazard functions arise from autonomous scalar ODEs, whereas non-monotonic hazard behavior requires non-autonomous dynamics. For additional discussion on the connection between ODEs and hazard functions in survival models, see \cite{christen2024dynamic}.

\mbh{}

\paragraph{Second-Order ODE Representations and Induced Probability Laws:}

Second-order ODEs provide a richer modeling framework by introducing higher-order dynamics such as acceleration and oscillations. These features are particularly useful for modeling hazard functions with non-monotonic behavior, including periodic or fluctuating risks. A second-order hazard model is given by

\[
h''(t)=\phi(h(t),h'(t),t),
\]
where \(\phi\) specifies the dependence of the hazard acceleration on its current level, rate of change, and time. Rewriting this system in state-space form,
\[
z(t)=h'(t), \qquad z'(t)=\phi(h(t),z(t),t),
\]
where \(z(t)\) denotes the instantaneous rate of change of the hazard.

For each specified hazard model, the corresponding probability density function is given by
\begin{equation*}\label{eq:pdf.ht}
f(t) = h(t)\exp\{-H(t)\}.
\end{equation*}

Random realizations can be generated using inverse transform sampling. Specifically, for $u \sim \mathrm{Uniform}(0,1)$, event times $t$ are obtained by solving $H(t) = -\log(1-u).$

In the next section, we specify concrete examples of higher-order hazard systems, each illustrating a distinct nonlinear or oscillatory dynamic together with its induced survival and density functions, thereby linking the ODE formulation to empirically relevant survival behavior.

%%%%%%%%%%%%%%%%%%%%%%%%%%%%%%%%%%%%%%%%
%%%%%%%%%%%%%%%%%%%%%%%%%%%%%%%%%%%%%%%%

\section{Nonlinear and Oscillatory Hazard Models via Higher-Order ODEs}\label{sec:non-linear}

In this section, \mbh{we illustrate the proposed higher-order dynamical framework through four representative second-order ODE-based hazard models. While the general formulation accommodates a broad class of nonlinear systems, we focus on models that arise naturally in time-to-event applications and highlight distinct dynamical mechanisms.} The models examined are: (1) a population dynamics-based hazard model, (2) an exponential hazard model with interaction effects, (3) a damped oscillatory hazard model, and (4) a sinusoidal hazard model. \mbh{Collectively, these examples demonstrate how the proposed framework captures diverse behaviors-including damping, oscillations, nonlinear growth and saturation, and interaction-driven dynamics-that are not accurately represented by standard first-order or linear oscillator-based hazard models.}

%The models examined are: (1) a damped oscillatory hazard model, (2) a population dynamics–based hazard model, (3) a sinusoidal hazard model, and (4) an exponential hazard model with interaction effects. \mbh{Collectively, these examples demonstrate how the proposed framework captures diverse behaviors—including damping, oscillations, nonlinear growth and saturation, and interaction-driven dynamics—that are not accurately represented by standard first-order or linear oscillator-based hazard models.}
%%%%%%%%%%%%%%%%%%%%%%%%%%%%%%%%%%%%%%%%%%
\subsection{Population Dynamics-Based Hazard Model}  
The classical logistic growth law is widely used in ecology to describe population growth under limited resources\cite{Pielou1977,murray2007mathematical}.
In its standard first–order form, the dynamics are
\begin{equation}
\label{eq:firstorderlogistic}
    h^\prime(t)
    \;=\;
    r\,h(t)\!\left(1 - \frac{h(t)}{K}\right),
\end{equation}
where $r>0$ is the intrinsic growth rate and $K>0$ is the carrying capacity.
The analytic solution is the familiar sigmoidal curve
\begin{equation*}
\label{eq:firstorder_solution}
    h(t)
    \;=\;
    \frac{K\,h_0\,e^{rt}}
         {K + h_0\!\left(e^{rt}-1\right)},
\end{equation*}
showing that in the first–order formulation the trajectory of $h(t)$ is determined entirely by its current value \cite{Verhulst1838}. Because
Eq.~\eqref{eq:firstorderlogistic} specifies only the instantaneous rate of change, the dynamics react
immediately to the current state and cannot capture inertia or acceleration. \cb{Consequently, the model produces only monotone
convergence to equilibrium and cannot generate overshoot, oscillations, or transient amplification.} \mbh{In contrast to linear second-order hazard models, which impose a fixed restoring structure, the logistic formulation introduces nonlinear state-dependent feedback, allowing the dynamics to vary with the level of the hazard itself.}

To incorporate inertia and state–dependent acceleration, we propose the nonlinear
second–order logistic system
\begin{equation}
\label{eq:popdyn_main}
    h^{\prime \prime}(t) = \phi(h(t),h^\prime(t),t) =
    r\,h(t)\!\left(1 - \frac{h(t)}{K}\right).
\end{equation}

\noindent Here the logistic feedback acts on the acceleration of the hazard. Unlike the harmonic–oscillator hazard model of \cite{ChristenRubio2025HOH}, which is a linear second-order ODE with a closed form solution, Eq.~\eqref{eq:popdyn_main} is \emph{nonlinear} and autonomous. Transient dynamics (overshoot, oscillations, non–monotone adjustment toward $K$) arise intrinsically from the logistic mechanism itself, rather
than from a linear restoring force.

Since Eq.~\eqref{eq:popdyn_main} has no simple closed–form solution, we introduce $v(t) = h^\prime(t)$ and rewrite it as a first–order system
\begin{equation}
\label{eq:popdyn_system}
\begin{cases}
h^\prime(t) \;=\; v(t),\\[4pt]
v'(t) \;=\; r\,h(t)\!\left(1 - \dfrac{h(t)}{K}\right),
\end{cases}
\qquad
h(0)=h_0,\;\; v(0)=v_0 .
\end{equation}

For numerical simulation we include a small damping term
$\eta\,h^\prime(t)$, $\eta>0$, giving
\[
h^{\prime \prime}(t) + \eta\,h^\prime(t)
\;=\;
r\,h(t)\left(1 - \frac{h(t)}{K}\right),
\]

which suppresses sustained oscillations and stabilizes convergence to $K$. Equivalently, this can be written as
\begin{equation}
\label{eq:pd_main}
\begin{cases}
h'(t) = v(t), \\
v'(t) = r\,h(t)\!\left(1-\dfrac{h(t)}{K}\right) - \eta\,v(t),
\end{cases}
\qquad h(0)=h_0,\; v(0)=v_0 .
\end{equation}

For sufficiently large damping,solutions initiated with $h_0>0$ remain positive and converge to $K$ over the range of initial conditions considered. Accordingly, the simulation study restricts attention to $h_0>0$ and $\eta>0$ for which the numerical solution satisfies $h(t)\ge 0$ for all $t$.

To better understand the dynamical role of the parameters, we rescale
\[
x(t) = \frac{h(t)}{K},
\qquad
\tau = \sqrt{r}\,t ,
\]
which converts the damped second–order logistic equation into
\[
\frac{d^2 x}{d\tau^2} + \zeta\,\frac{dx}{d\tau}
\;=\;
x(1-x),
\qquad
\zeta = \frac{\eta}{\sqrt{r}}.
\]
Thus, the transient behavior is governed by the damping parameter $\zeta$.  Large damping
($\zeta \gg 1$) produces monotone convergence, whereas weak damping
($\zeta \ll 1$) produces overshoot and damped oscillatory adjustment toward $x=1$ (i.e.,
$h(t)=K$). Damping controls the decay rate of oscillations, while the inertia generates the initial overshoot.

Classical extensions of the logistic law generate non–monotone trajectories by
adding \emph{memory}. A delayed logistic model,
\begin{equation}
\label{eq:hazard_delay}
    h^\prime(t)
    \;=\;
    r\,h(t)\!\left(1 - \frac{h(t-\tau)}{K}\right),
\end{equation}
produces overshoot or oscillation when the delay $\tau$ is sufficiently large,
because the system reacts to the past state $h(t-\tau)$
\cite{Rosen1987,ZhangGopalsamy1990,SchleyGourley2000}.
Here, complex behavior is driven by history. In contrast,Eq.~\eqref{eq:popdyn_main} produces similar transient effects
(overshoot, oscillation) \emph{without requiring delay}: inertia replaces memory \mbh{as the driving mechanism. This highlights a key distinction between our proposed framework and delay-based models, with higher-order dynamics providing an intrinsic mechanism for complex temporal behavior.}

Figure~\ref{fig:logistic_comparison} illustrates these differences: the first–order model converges monotonically, the delayed model overshoots due to feedback on a past state, and the second–order model overshoots because acceleration carries the system beyond $K$ before it slows down.

\cb{
Oscillatory hazard patterns can also be modeled using damped oscillatory hazard 
formulations (Section \ref{sec:DO}), which provide a flexible statistical description 
of non-monotone hazard behavior. In contrast, the second-order logistic model generates
oscillations through the underlying system dynamics. Near equilibrium, these dynamics 
reduce to a damped harmonic oscillator, implying that the oscillatory hazard model can 
be viewed as a local linear approximation of the logistic system. Accordingly, the two models 
play complementary roles: the oscillatory model provides a descriptive benchmark, while the logistic formulation offers a mechanistic explanation of transient oscillatory hazards.}
\begin{figure}[H]
\centering
\includegraphics[scale=0.7]{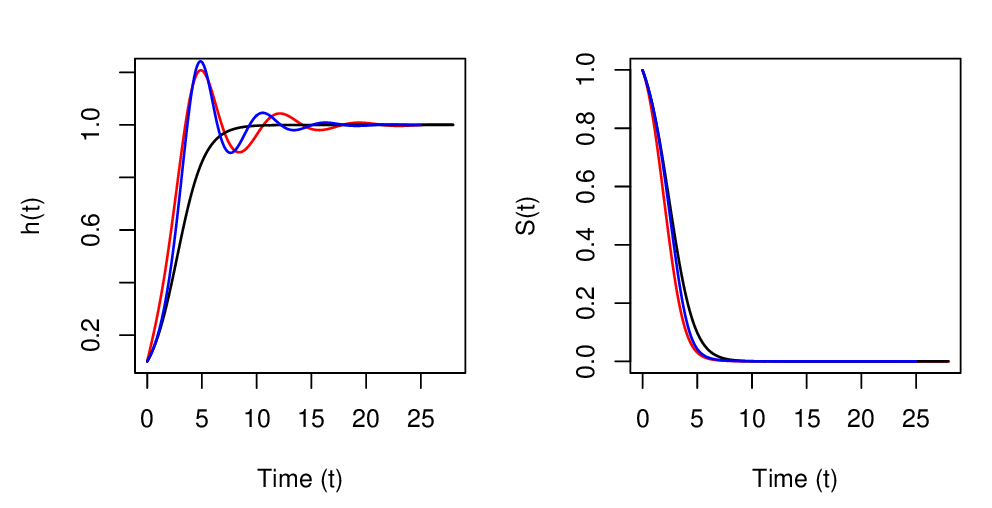}
\caption{
Hazard functions $h(t)$ and survival functions $S(t)$ for logistic hazard dynamics between first-order (black), delayed (blue), and damped second-order logistic (red) hazard models
($r=0.8,\; K=1,\; \tau=1.2,\; \zeta=0.5,\; h_0=0.1,\; v_0=0.2$).}
\label{fig:logistic_comparison}
\end{figure}

%%%%%%%%%%%%%%%%%%%%%%%%%%%%%%%%%%%%%%%%%%
%%%%%%%%%%%%%%%%%%%%%%%%%%%%%%%%%%%%%%%%%%
\subsection{Exponential Hazard Model with Interaction}
 % removing decay
\mbh{The exponential function plays a central role in modeling dynamic processes in physics and biology, including population growth, tumor development, and pharmacokinetics. However, pure exponential growth is often unrealistic, as it does not account for constraints such as limited resources or environmental capacity. To address this limitation, we introduce an exponential hazard model with interaction that extends the basic exponential framework by incorporating nonlinear coupling effects. Within the proposed higher-order formulation, this model illustrates how interactions between the hazard level and its rate of change can generate dynamics that are not attainable in linear second-order systems. In this section, we focus on the case of exponential growth.}

We define an exponential hazard model with interaction through the second-order ODE
\begin{equation}\label{eq:expo_grow}
h''(t)=\phi\!\big(h(t),h'(t),t\big)=\alpha h(t)-\beta\,\big(h'(t)\big)^2 .
\end{equation}
\noindent \mbh{The quadratic term $(h'(t))^2$ introduces nonlinear coupling into the system, making the acceleration depend on the magnitude of the hazard’s rate of change. This contrasts with linear oscillator-based models, where the dynamics depend only linearly on $h(t)$ and $h'(t)$. As a result, the interaction moderates the growth of the hazard relative to the case $\beta=0$, with stronger damping effects when the rate of change is large.}

For analysis and computation, Eq.~\eqref{eq:expo_grow} may be written as the equivalent first-order system
\begin{equation}
    \begin{cases}
h'(t) = v(t), \\
v'(t) = \alpha h(t) - \beta v(t)^2
\end{cases}
\qquad
h(0)=h_0,\;\; v(0)=v_0 .
\end{equation}

The model is parameterized by $(h_0, v_0, \alpha, \beta)$, where $h_0 = h(0)$ and
$v_0 = h'(0)$ specify the initial condition. The parameter $\alpha$ governs the
growth component of the hazard, while $\beta$ introduces a nonlinear interaction
through the rate of change $h'(t)$. \mbh{The associated quadratic term acts as a state-dependent damping mechanism that becomes more pronounced when $h'(t)$ is large, leading to adaptive moderation of hazard growth.}

\vspace{-6pt}
\paragraph{Linear Case:} When $\beta = 0$, the interaction term vanishes and the model reduces to the linear second-order equation
\[
h''(t) = \alpha h(t).
\]
For $\alpha > 0$, the solution consists of exponentially growing and decaying components, with long-term behavior dominated by exponential growth. This case provides a useful baseline for comparison with the nonlinear model, highlighting how the interaction term alters the hazard dynamics. Closed-form expressions for the solution and cumulative hazard are provided in Appendix~\ref{app:exponential}.

\vspace{-6pt}
\paragraph{General Case:}
When $\beta>0$, the nonlinear interaction term in Eq.~\eqref{eq:expo_grow} \mbh{fundamentally alters the system dynamics. The quadratic dependence on $h'(t)$ introduces state-dependent damping, which moderates the growth of the hazard as its rate of change increases. As a result, the model captures adaptive growth behavior that deviates substantially from pure exponential trajectories, a feature not attainable within linear second-order systems.} In this case, no closed-form analytic solution is available, and the solution must be obtained numerically for given initial conditions. Numerical solutions for hazard models without closed-form expressions can be obtained using standard ODE solvers, such as Runge-Kutta methods (see, e.g., \cite{zingg1999runge, soetaert2010solving}),
with specified initial conditions and sufficiently small step sizes to ensure numerical stability and accuracy. Figure~\ref{fig:plot-exp} illustrates the behavior of the exponential hazard model for $\beta=0$ and $\beta>0$. Introducing the interaction term moderates the growth of the hazard, resulting in a slower increase in $h(t)$ compared with the case $\beta=0$. This change in hazard dynamics is also reflected in the survival functions, where the interaction affects both the timing and the rate of survival decay.

%%%%%%%%%%%%%%%%%%%%%%%%%%%%%%%%%%%%%%%%%%
\begin{figure}[H]
    \centering
    \includegraphics[scale=0.7]{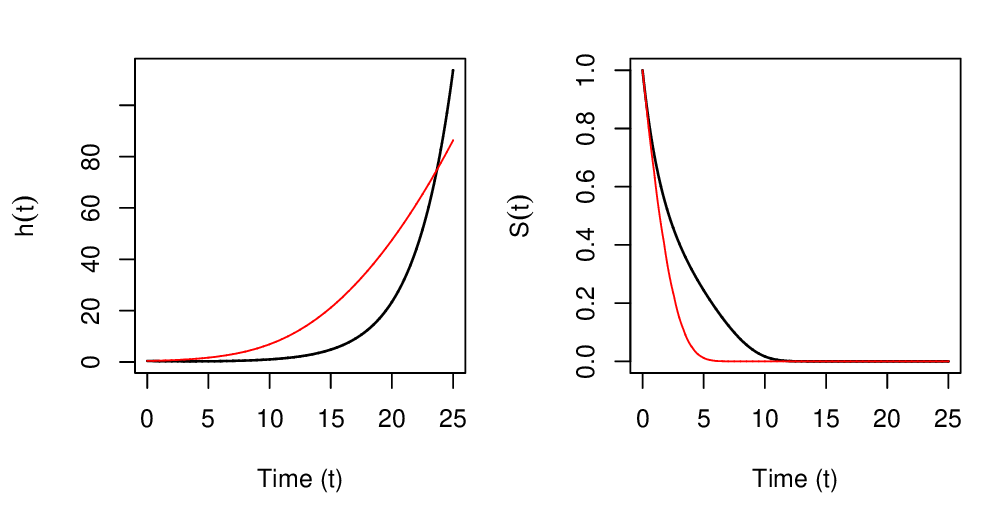}
    \caption{Hazard function $h(t)$ and survival function $S(t)$ for the exponential hazard model without interaction ($\beta=0$, black) and with nonlinear interaction ($\beta=0.1$, red), with parameters $\alpha = 0.1,\ h_0 = 0.4,\ v_0 = 0.1$.}
    \label{fig:plot-exp}
\end{figure}

%%%%%%%%%%%%%%%%%%%%%%%%%%%%%%%%%%%%%%%%%%

\subsection{Damped Oscillatory Hazard Model}\label{sec:DO}
The Damped Oscillatory Hazard Model describes hazard behavior that fluctuates over time with gradually decreasing intensity (see e.g. in \cite{socol2020damped, price1991insights}). This class of models is particularly useful in settings where the risk does not evolve monotonically but exhibits oscillatory behavior before stabilizing, for example due to adaptation, periodic external factors, or learning before eventually stabilizing (see e.g. \cite{gitterman2012noisy}). By capturing both cyclical nature and damping effects, this model provides a more realistic and flexible framework to understand time-varying risks in complex systems.

%\cb{The damped oscillatory hazard model has recently been proposed in the literature \cite{christen2024harmonic}. In this work, we express the model within a general second-order ODE framework and rewrite it as an equivalent first-order state-space system. The parameterization $(\alpha,\beta,\gamma)$ represents the model directly through the coefficients of the differential equation, which simplifies parameter estimation and interpretation in survival modeling.}

\mbh{The damped oscillatory hazard model has recently been studied by Christen and Rubio \cite{christen2024harmonic} as a specific linear second-order system with constant coefficients. Within the present framework, this model arises as a particular instance of the general higher-order formulation. Here, we express the model in a unified second-order ODE representation and embed it within a broader class of dynamical systems, which facilitates comparison, extension, and generalization. In contrast to the original formulation, the parameterization $(\alpha,\beta,\gamma)$ directly corresponds to the coefficients of the governing differential equation. This representation simplifies interpretation and estimation, and provides a natural foundation for extending the model to nonlinear and state-dependent dynamics considered later in this section.}

\cb{The damped oscillatory hazard model can be written in the general second-order form
\[
h''(t) = \phi(h(t), h'(t), t),
\]
where
\[
\phi(h(t), h'(t), t) = -\alpha h'(t) - \beta h(t) + \gamma.
\]

\noindent \mbh{In this case, the function $\phi$ is linear in both $h(t)$ and $h'(t)$, corresponding to a constant-coefficient system. While this yields closed-form solutions and tractable analysis, it represents only a restricted subset of the broader class of nonlinear systems encompassed by the proposed framework.}

Introducing the auxiliary variable \(z(t) = h'(t)\), the model can be expressed as the equivalent first-order state-space system
\begin{equation}
\label{eq:osc.ht}
 h'(t) = z(t), \qquad
z'(t) = -\alpha z(t) - \beta h(t) + \gamma.   
\end{equation}

}

\noindent \mbh{The qualitative behavior of the system depends on the discriminant $\Delta = \alpha^2 - 4\beta$ of the characteristic equation $r^{2}+\alpha r+\beta=0$, leading to three distinct cases: underdamped $(\Delta<0)$, critically damped $(\Delta=0)$, and overdamped $(\Delta>0)$. In the underdamped regime, the hazard exhibits oscillatory behavior with exponentially decaying amplitude, while in the critically damped and overdamped regimes, the hazard converges monotonically to its equilibrium level. Closed-form expressions for these solutions, along with the corresponding cumulative hazard functions, are provided in Appendix~\ref{app:closeform_damped}. 

The cumulative hazard function $H(t)$ is obtained by integrating $h(t)$, and in the linear oscillatory case this can be evaluated in closed form. These expressions, together with the corresponding density $f(t) = h(t)\exp{-H(t)}$, enable direct simulation of event times via inverse transform sampling. For the more general nonlinear systems considered later, $H(t)$ is evaluated numerically, and event times are generated accordingly.}

% \noindent The form of the solution is determined by the discriminant $\Delta=\alpha^{2}-4\beta$ of the characteristic equation $r^{2}+\alpha r+\beta=0$, leading to three distinct cases: underdamped $(\Delta<0)$, critically damped $(\Delta=0)$, and overdamped $(\Delta>0)$.

In Figure~\ref{fig:oscillatory}, we illustrate the typical behavior of the three damping cases. The underdamped case shows oscillations in the hazard before settling, the critically damped case returns to equilibrium smoothly without oscillation, and the overdamped case approaches equilibrium more slowly. These differences produce small early-time variations in the survival curves, while the long-run behavior is similar across all three cases.

\begin{figure}[h!]
    \centering
    \includegraphics[scale=0.7]{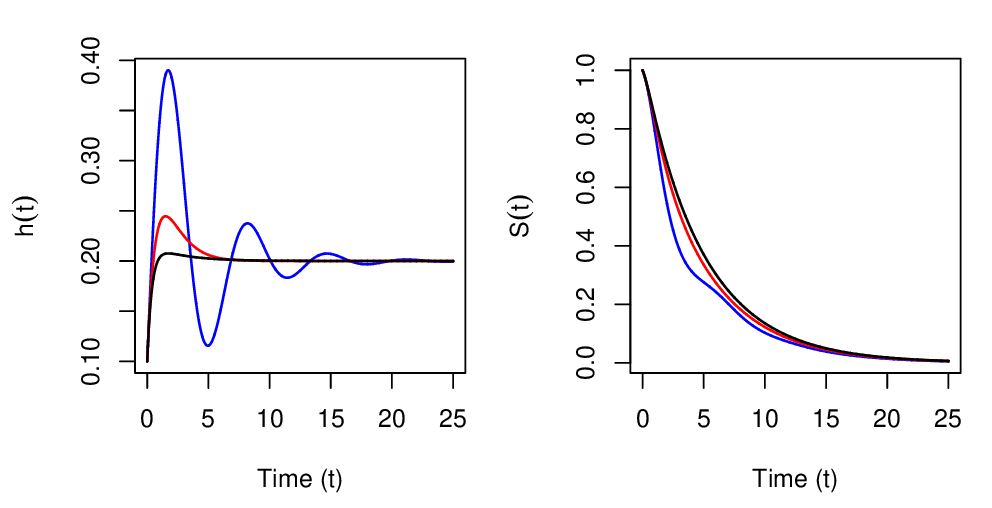}
    \caption{Hazard functions $h(t)$ and survival functions $S(t)$ for the damped oscillatory model under the underdamped ($\alpha = 0.5, \ \beta = 1, \ \gamma = 0.2, \ h_0 = 0.1, \ v_0 = 0.3$; blue),  critically damped ($\alpha = 2, \ \beta = 1, \ \gamma = 0.2, \ h_0 = 0.1, \ v_0 = 0.3$; red), and overdamped ($\alpha = 3, \ \beta = 1, \ \gamma = 0.2, \ h_0 = 0.1, \ v_0 = 0.3$; black) cases.}
    \label{fig:oscillatory}
\end{figure}

\begin{theorem}\label{thm:osc.longterm}(Long-Term Behavior of the Damped Oscillatory Hazard)

Consider the damped oscillator Eq.~\eqref{eq:osc.ht}, with initial conditions \( h(0) = h_0 >0 \) and \( h'(0) = v_0 \). Assume the unique solution satisfies $h(t) >0$ for all $t \geq 0$. Then:

\begin{enumerate}
    \item \textbf{Existence and Uniqueness:} For any $h_0>0$ and $v_0 \in R$, there exists a unique solution $h(t)$ defined for all $t \geq 0$.
    
    \item \textbf{Asymptotic Stability:} For any $\alpha>0$ and $\beta>0$, the equilibrium $h^*=\gamma/\beta$ is globally asymptotically stable, and
\[
\lim_{t\to\infty} h(t)=h^*=\frac{\gamma}{\beta}.
\]
    \item \textbf{Damping Regimes:} Let $\Delta=\alpha^2-4\beta$. 
    \begin{itemize}
    \item If $\Delta<0$, the solution exhibits oscillations about $h^*$ with exponentially decaying amplitude (underdamped).
\item If $\Delta=0$, the solution converges to $h^*$ at the critical damping rate.
\item If $\Delta>0$, the solution converges monotonically to $h^*$ as a sum of decaying exponentials (overdamped).
\end{itemize}
\end{enumerate}
\end{theorem}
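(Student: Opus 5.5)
The system \eqref{eq:osc.ht} is a linear constant-coefficient ODE, so I will work with it directly rather than through the general existence theorem. Writing $\mathbf{Y}=(h,z)^\top$, the system reads $\mathbf{Y}'=A\mathbf{Y}+b$ with
\[
A=\begin{pmatrix}0&1\\-\beta&-\alpha\end{pmatrix},\qquad b=\begin{pmatrix}0\\ \gamma\end{pmatrix}.
\]
The right-hand side is globally Lipschitz with constant $\|A\|$. The Lipschitz part of the preceding existence theorem therefore applies, although the boundedness hypothesis fails, since $\phi$ is linear. I will not invoke that theorem. Instead I will use the variation-of-constants formula $\mathbf{Y}(t)=e^{At}\mathbf{Y}_0+\int_0^t e^{A(t-s)}b\,\mathrm{d}s$. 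This exhibits a solution defined for all $t\ge 0$, and uniqueness follows from Gr\"onwall's inequality applied to the difference of two solutions. This settles part 1, and the positivity assumption $h(t)>0$ is needed only for the hazard interpretation, not for existence.

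For part 2, I will shift to the deviation $u(t)=h(t)-\gamma/\beta$, which is legitimate because $\beta>0$. It satisfies the homogeneous equation $u''+\alpha u'+\beta u=0$. The eigenvalues of $A$ are the roots $r_{1,2}=\bigl(-\alpha\pm\sqrt{\Delta}\bigr)/2$ of $r^2+\alpha r+\beta=0$. By Vieta, $r_1+r_2=-\alpha<0$ and $r_1r_2=\beta>0$. Hence if the roots are real they are both negative, and if they are complex their common real part is $-\alpha/2<0$. So $A$ is Hurwitz, $\|e^{At}\|\le Ce^{-\mu t}$ for some $\mu>0$, and $(u,u')\to 0$ from every initial condition. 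Since the system is linear, local stability is automatically global, and this gives $h(t)\to\gamma/\beta$.

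For part 3, I will write out the explicit general solution in each case, using the closed forms referenced in Appendix~\ref{app:closeform_damped}.
\begin{itemize}
\item If $\Delta<0$, then $u=e^{-\alpha t/2}\bigl(c_1\cos\omega t+c_2\sin\omega t\bigr)$ with $\omega=\sqrt{-\Delta}/2$. This oscillates about $0$ with an envelope $Re^{-\alpha t/2}$.
\item If $\Delta=0$, then $u=(c_1+c_2t)e^{-\alpha t/2}$.
\item If $\Delta>0$, then $u=c_1e^{r_1t}+c_2e^{r_2t}$ with $r_2<r_1<0$.
\end{itemize}
The constants $c_1,c_2$ are determined by $u(0)=h_0-\gamma/\beta$ and $u'(0)=v_0$.

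The main obstacle is the word ``monotonically'' in the overdamped and critically damped cases. A sum of two exponentials, or the form $(c_1+c_2t)e^{-\alpha t/2}$, can change sign once, so $h$ may overshoot $h^*$ once for some initial data. I will handle this by noting that $u'$ is again of the same form and therefore has at most one zero. Consequently $h$ is monotone for all sufficiently large $t$: it is eventually monotone, with at most one turning point and no oscillation. I will state the claim in that sense, which I expect is what the statement intends.
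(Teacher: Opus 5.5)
Your proposal is correct and follows essentially the same route as the paper's proof. Both shift to $u=h-\gamma/\beta$, show that the roots of $r^2+\alpha r+\beta=0$ lie in the open left half-plane, and read off the three regimes from the root structure; your variation-of-constants and Gr\"onwall argument, and your Vieta sign check, supply details the paper simply attributes to ``standard theory.''

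Your caveat on ``monotonically'' is genuinely needed, and the paper's proof does not address it. With the paper's own overdamped parameters ($\alpha=3$, $\beta=1$, $\gamma=0.2$, $h_0=0.1$, $v_0=0.3$) one gets $u(t)\approx 0.0171e^{-0.382t}-0.1171e^{-2.618t}$. This crosses zero near $t\approx 0.86$, so $h$ overshoots $h^*=0.2$ before returning, and only your eventual-monotonicity version (at most one turning point) is true.
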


\begin{proof}
The proof is given in Appendix~\ref{app:osc-proof}.
\end{proof}

\paragraph{Relationship with the Oscillatory Hazard Model:}

\mbh{We now examine the relationship between the proposed formulation and the oscillatory hazard model of Christen and Rubio \cite{christen2024harmonic}. While the two models are mathematically equivalent under a suitable parameter transformation, their parameterizations differ substantially in structure and interpretability, with important implications for statistical inference.} The oscillatory hazard model in \cite{christen2024harmonic}  can be written as
\[
h''(t) + 2\eta\omega_0 h'(t) + \omega_0^2\bigl(h(t)-h_b\bigr) = 0,
\]
which is equivalent to
\[
h''(t) + 2\eta\omega_0 h'(t) + \omega_0^2 h(t) = \omega_0^2 h_b.
\]

Both this formulation and the model in Eq.~\eqref{eq:osc.ht} can be expressed in the unified form
\[
h''(t)+a_1 h'(t)+a_2 h(t)=a_3,
\]
where \(\mathbf a=(a_1,a_2,a_3)^\top\) denotes the vector of ODE coefficients. For the model in Eq.~\eqref{eq:osc.ht}, the parameter-to-coefficient map is
$T(\alpha,\beta,\gamma)=(\alpha,\beta,\gamma),$
so that each parameter directly corresponds to one coefficient of the differential equation. For the Christen--Rubio parameterization the corresponding map is $T_{\mathrm{CR}}(\eta,\omega_0,h_b)
=
\bigl(2\eta\omega_0,\ \omega_0^2,\ \omega_0^2 h_b\bigr).$ Thus the two formulations are equivalent under the parameter transformation
\[
\alpha = 2\eta\omega_0,
\qquad
\beta = \omega_0^2,
\qquad
\gamma = \omega_0^2 h_b.
\]

To examine the local structure of the two parameterizations, consider the Jacobian matrices of the coefficient maps.  
For the proposed parameterization,
\[
J
=
\dfrac{\partial(a_1,a_2,a_3)}{\partial(\alpha,\beta,\gamma)}
=
\begin{pmatrix}
1 & 0 & 0\\
0 & 1 & 0\\
0 & 0 & 1
\end{pmatrix},
\]
so that
\[
J^\top J = I_3,
\]
indicating that the parameters align directly with the coefficients of the differential equation. For the Christen--Rubio parameterization the Jacobian matrix is
\[
J_{\mathrm{CR}}
=
\dfrac{\partial(a_1,a_2,a_3)}{\partial(\eta,\omega_0,h_b)}
=
\begin{pmatrix}
2\omega_0 & 2\eta & 0\\
0 & 2\omega_0 & 0\\
0 & 2\omega_0 h_b & \omega_0^2
\end{pmatrix}.
\]

This representation shows that perturbations in $\omega_0$ simultaneously affect all coefficients of the differential equation, whereas perturbations in $\eta$ and $h_b$ influence the system through their interaction with $\omega_0$, resulting in locally coupled parameter effects. Consequently, the induced parameter geometry exhibits local coupling among parameters. Indeed,
\[
J_{\mathrm{CR}}^\top J_{\mathrm{CR}} =
\begin{pmatrix}
4\omega_0^2 & 4\eta\omega_0 & 0\\
4\eta\omega_0 &
4\eta^2 + 4\omega_0^2 + 4\omega_0^2 h_b^2 &
2\omega_0^3 h_b\\
0 & 2\omega_0^3 h_b & \omega_0^4
\end{pmatrix},
\]
which contains nonzero off–diagonal terms indicating local parameter coupling. Although the two formulations define the same hazard family, the choice of parameterization can influence prior specification in Bayesian analysis. Under the transformation $(\alpha,\beta,\gamma) = (2\eta\omega_0,\ \omega_0^2,\ \omega_0^2 h_b),$
the determinant of the Jacobian is
\[
\det(J_{CR})=4\omega_0^4.
\]
Therefore, independent priors assigned to $(\eta,\omega_0,h_b)$ induce the following prior on the ODE coefficients:
\[
\pi_{\alpha,\beta,\gamma}(\alpha,\beta,\gamma)
=
\pi_{\eta,\omega_0,h_b}
\!\left(
\frac{\alpha}{2\sqrt{\beta}},
\sqrt{\beta},
\frac{\gamma}{\beta}
\right)
\frac{1}{4\beta^2},
\qquad \beta>0.
\]

Thus, even when $(\eta,\omega_0,h_b)$ are assigned independent priors, the implied prior on $(\alpha,\beta,\gamma)$ becomes nonlinear and dependent. This is a general consequence of nonlinear parameter transformations. \mbh{In contrast, our proposed parameterization allows priors to be specified directly on the coefficients of the governing differential equation, leading to a more transparent interpretation and a simpler local geometry for inference. This distinction is not merely algebraic, but has practical implications for model specification, identifiability, and the interpretation of dynamical effects in hazard modeling.}

\subsection{Sinusoidal Hazard Model}

Standard hazard models, such as the exponential and Weibull, assume constant or monotonic hazard functions \cite{stanley2016comparison}. However, in many real-world situations, the hazard may vary periodically due to factors like seasonal changes or biological rhythms \cite{helm2013annual, aschoff1967adaptive, hridoy2025exploration}. In such cases, sinusoidal hazard models provide a natural alternative by capturing these recurring risk patterns over time. This model is particularly well suited for contexts involving periodic or cyclical hazard behavior, including seasonal effects on survival (e.g., weather-related risk variations), recurring health events (e.g., annual epidemics), and cyclical influences from market or environmental conditions. Unlike the damped oscillatory hazard model, which exhibits decaying fluctuations over time due to damping, the sinusoidal model preserves constant amplitude and frequency throughout the time domain. \mbh{Within the proposed higher-order framework, such periodic hazard behavior arises naturally from second-order dynamics, without requiring external forcing or time-dependent coefficients.}

The sinusoidal hazard model is defined by the second-order ODE 
\[
h''(t) = \phi(h(t), h^\prime(t),t) = -\omega^2  h(t)
\]

\noindent whose general solution is $h(t) = A \cos(\omega t) + B \sin(\omega t)$.

\noindent Since this form oscillates around zero and may take negative values, we introduce a constant shift $(c)$ to ensure $h(t)>0$ for all $t$. We therefore consider the modified equation

\begin{equation}
h''(t) = -\omega^2\big(h(t)-c\big).
\end{equation}

\noindent with solution

\begin{equation}
h(t) = A \cos(\omega t) + B \sin(\omega t) + c 
\end{equation}

\mbh{This formulation corresponds to a linear second-order system with purely oscillatory dynamics and no damping, resulting in sustained periodic behavior. Unlike damped oscillatory hazard models, which converge to equilibrium, this system does not admit a stable steady state and instead produces persistent fluctuations over time.} The parameter $\omega$ controls the oscillation frequency, and $c$ is the baseline hazard level, capturing the persistent risk level that remains constant regardless of the cyclical fluctuations. The oscillatory component has amplitude $R = \sqrt{A^2 + B^2}$, so the hazard fluctuates around $c$ with amplitude $R$. Given initial conditions $h(0)=h_0$ and $h'(0)=v_0$, the solution can be written as
\begin{equation}
\label{eq:ht.sinusoidal}
h(t) = (h_0 - c)\cos(\omega t) + \frac{v_0}{\omega}\sin(\omega t) + c ,
\end{equation}

\noindent in which case the amplitude reduces to
\[
R=\sqrt{(h_0-c)^2+\left(\frac{v_0}{\omega}\right)^2},
\]
and the minimum of the hazard over time is therefore $\min_t h(t)=c-R$.  For $h_0>0$, strict positivity of $h(t)$ for all $t$ is equivalent to
\[
c>\frac{h_0}{2}+\frac{v_0^2}{2h_0\omega^2}.
\]

\noindent \mbh{This ensures that the hazard remains strictly positive while preserving the periodic structure of the model.
}The corresponding cumulative hazard function follows from integration of $h(t)$ and it is given by
\begin{equation*}
H(t) = ct + \frac{h_0 - c}{\omega} \sin(\omega t) + \frac{v_0}{\omega^2} \{ 1- \cos(\omega t)\}.
\end{equation*}

\noindent The pdf is
\begin{equation*}
f(t) = \left( c + (h_0-c) \cos(\omega t) + \frac{v_0}{\omega} \sin(\omega t) \right) 
\exp\left(- ct - \frac{h_0-c}{\omega} \sin(\omega t) - \frac{v_0}{\omega^2} ( 1- \cos(\omega t)) \right).
\end{equation*}

Figure~\ref{fig:plot-sinusodial} illustrates the typical behavior of the sinusoidal hazard model. In contrast to the oscillatory hazard model, which eventually converges to a steady level, the sinusoidal hazard model continues to fluctuate over time and does not approach equilibrium. \mbh{This makes the model particularly suitable for settings where risk follows recurring patterns rather than stabilizing over time, such as seasonal disease incidence or periodic environmental exposures. More broadly, it highlights the ability of the proposed higher-order framework to represent both transient and sustained oscillatory behavior, depending on the underlying dynamical structure.}

%%%%%%%%%%%%%%%%%%%%%%%%%%%%%%%%%%%%%%%%%%
\begin{figure}[H]
    \centering
    \includegraphics[scale=0.7]{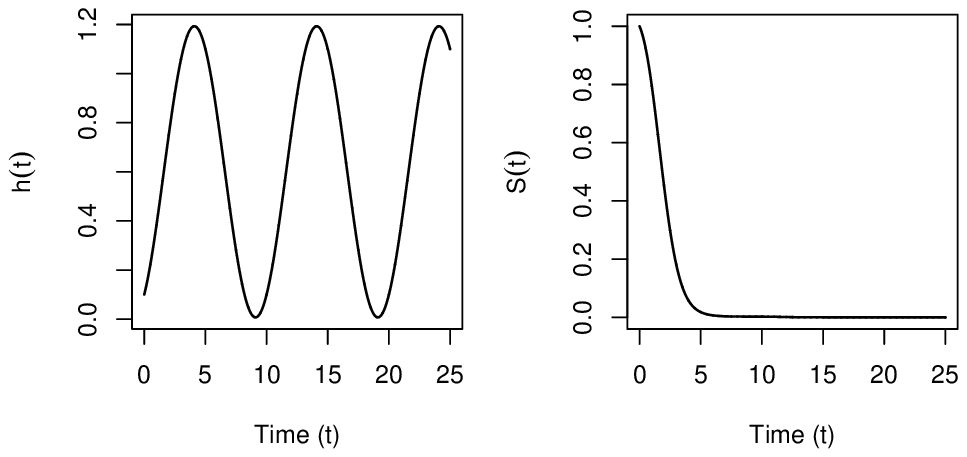}
    \caption{Hazard function $h(t)$ and survival function $S(t)$ for the sinusoidal hazard model ($h_0 = 0.1,\ v_0 = 0.2, \ \omega = 0.2\pi, \ c = 0.6$).}
    \label{fig:plot-sinusodial}
\end{figure}

%%%%%%%%%%%%%%%%%%%%%%%%%%%%%%%%%%%%%%%%%%
%%%%%%%%%%%%%%%%%%%%%%%%%%%%%%%%%%%%%%%%%%

\section{Numerical Algorithm for Higher Order ODE Hazard Models}\label{sec:numerical_study}

In this section, we describe a general numerical algorithm applicable to all proposed ODE-based hazard models. This algorithm provides a general framework for solving the corresponding systems of second-order ODEs, and generating survival time realizations. \mbh{It accommodates both linear and nonlinear systems and can be implemented using standard numerical solvers such as the Runge--Kutta method.}

\mbh{This procedure provides a unified simulation framework for all models considered in this paper, including both linear systems with closed-form solutions and nonlinear systems requiring numerical integration. In particular, it enables the generation of survival data from higher-order hazard dynamics without relying on analytical tractability.}

\begin{algorithm}[H]
\caption{Numerical Simulation of Second-Order ODE: $h''(t) = \phi_\theta(h(t), h'(t), t)$}
\begin{algorithmic}[1]
\REQUIRE Model parameters $\theta$, initial values $h(t_0)$ and $v(t_0) = h'(t_0)$, time interval $[t_0, T]$, step size $\Delta t$
\ENSURE Discrete trajectories of $h(t)$, $v(t)$, cumulative hazard $H(t)$, and simulated failure time $t^*$

\STATE Define uniform time grid: $t_i = t_0 + i \cdot \Delta t$ for $i = 0, 1, \dots, N$ such that $t_N = T$
\STATE Initialize:
\[
h_0 \gets h(t_0), \quad v_0 \gets v(t_0), \quad H_0 \gets 0
\]
\FOR{$i = 1$ to $N$} \STATE Update $(h(t), v(t))$ from $t_{i-1}$ to $t_i$ using a 4th-order Runge--Kutta scheme applied to the system
\[
\begin{cases}
\dfrac{dh}{dt} = v(t), \
\dfrac{dv}{dt} = \phi_\theta(h(t), v(t), t)
\end{cases}
\]

    \begin{enumerate}
        \item Compute RK4 steps $k_1, k_2, k_3, k_4$ for both $h$ and $v$
        \item Update:
        \[
        h_i \gets h_{i-1} + \frac{\Delta t}{6}(k_{1h} + 2k_{2h} + 2k_{3h} + k_{4h})
        \]
        \[
        v_i \gets v_{i-1} + \frac{\Delta t}{6}(k_{1v} + 2k_{2v} + 2k_{3v} + k_{4v})
        \]
    \end{enumerate}
    \STATE Compute cumulative hazard at $t_i$ using trapezoidal rule:
    \[
    H_i \gets H_{i-1} + \frac{1}{2} \left( h_{i-1} + h_i \right) \Delta t
    \]
\ENDFOR

\STATE \textbf{Simulate failure time:}
\STATE Draw $u \sim \text{Uniform}(0,1)$
\STATE Solve for $t^*$ such that:
\[
H(t^*) = -\log(1 - u)
\]
using root-finding (e.g., bisection or \texttt{uniroot()} in R), based on interpolating the discrete cumulative hazard trajectory $\{(t_i, H_i)\}$

\RETURN Arrays $\{h_i\}_{i=0}^N$, $\{v_i\}_{i=0}^N$, $\{H_i\}_{i=0}^N$, and simulated failure time $t^*$
\end{algorithmic}
\end{algorithm}

%%%%%%%%%%%%%%%%%%%%%%%%%%%%%%%%%%%%%%%%%%
%%%%%%%%%%%%%%%%%%%%%%%%%%%%%%%%%%%%%%%%%%

\section{Inference}\label{sec:inference}

Moments and moment generating functions (MGFs) are commonly used in
survival analysis to describe the distribution of failure times and to
support inference on quantities such as the mean survival time and
variability \cite{chakraborti2019higher, song2019moment}. In
degradation--threshold models, MGFs are also used to study first-passage
time distributions, which determine the associated hazard and survival
functions. In addition, parameter estimation for the proposed hazard
models is carried out using maximum likelihood estimation (MLE), which
provides a principled framework for inference based on observed event
times and censoring. In this section, we focus on moment generating
functions and likelihood-based inference for proposed hazard models defined through higher-order ordinary differential equations.

\subsection{Moment Generating Function}

Let $T$ denote a non-negative random variable with density $f(t)$ defined in Eq.~\eqref{eq:pdf.ht}. The MGF of $T$ is given by

\[
M_T(s) = \mathbb{E}[e^{sT}] = \int_0^\infty e^{s t} f(t) \, dt
\]
\noindent for all $s$ in the set for which the integral exists.

For the hazard models considered in this work, the existence and domain
of the MGF depend on the asymptotic behavior of $H(t)$. Since $h(t)\ge 0$,
$H(t)$ is monotone increasing and its asymptotic growth rate determines the tail behavior and the existence of the MGF. Closed-form expressions for $M_T(s)$ are generally unavailable for
the proposed hazard models, and numerical integration is therefore
required. Nevertheless, the existence of the MGF can be characterized
for each model based on the asymptotic behavior of $H(t)$, as discussed
below.

For the damped oscillatory hazard model, Theorem ~\ref{thm:osc.longterm} shows that
$h(t)\to \gamma/\beta$ as $t\to\infty$ under $\alpha>0$ and $\beta>0$.
Consequently, the cumulative hazard satisfies

\[
H(t) \sim (\gamma/\beta)t\quad \text{as } t\to\infty,
\] 
\noindent implying exponential tail decay of the
corresponding survival distribution. It follows that $M_T(s)$ exists for all $s<\gamma/\beta$ and
diverges for $s\ge \gamma/\beta$.

For $\eta>0$, solutions of the population dynamics–based hazard equation converge to the equilibrium $K$ as $t \to \infty$.
Consequently, $H(t) \approx Kt$ as $t \to \infty$. This implies that
$S(t) \approx e^{-Kt}$ for large $t$, and that the moment generating
function $M_T(s)$ exists for all $s < K$ and diverges for $s > K$.

For the sinusoidal hazard model
with parameters
chosen such that $h(t)>0$ for all $t\ge 0$, the cumulative hazard can be
written as 
\[
H(t)=ct+O(1)\quad \text{as } t\to\infty.
\]

\noindent The oscillatory terms in $h(t)$ are bounded and therefore
contribute only bounded fluctuations to $H(t)$, without affecting its
linear growth rate. Consequently, the $S(t)$ decays
exponentially at rate $c$. This tail behavior determines the existence
of MGF. When $s<c$, the exponential decay of the survival distribution dominates the growth of the factor $e^{sT}$,
and the integral defining $M_T(s)$ converges. When $s\ge c$, the growth
of $e^{sT}$ overwhelms the tail decay, causing the integral to diverge.
As a result, $M_T(s)$ exists for all
$s<c$ and diverges for $s\ge c$.

For the exponential hazard model with an interaction term, when $\beta=0, \alpha>0$, $H(t)$ admits a
closed-form expression and grows exponentially in $t$, provided that
$h_0+v_0/\sqrt{\alpha}>0$. In particular,
\[
H(t)\sim C\,e^{\sqrt{\alpha}\,t}\quad \text{as } t\to\infty,
\]
for some constant $C>0$. Consequently, the rapid decay of the survival function implies that $M_T(s)$ exists and is finite for all $s\in\mathbb{R}$. In the boundary case $h_0+v_0/\sqrt{\alpha}=0$, the leading exponential term in $H(t)$ vanishes and $H(t)$ remains bounded. Consequently,
$S(\infty)>0$, implying a positive probability that the event does not
occur. In this case, $T=\infty$ with positive probability, and
$M_T(s)$ is infinite for all $s>0$.

%%%%%%%%%%%%%%%%%%%%%%%%%%%%%%%%%%%%%%%%%%
%%%%%%%%%%%%%%%%%%%%%%%%%%%%%%%%%%%%%%%%%%

\subsection{Maximum Likelihood Estimation}

Due to the parametric structure of the proposed hazard models, the log-likelihood can be expressed in terms of the hazard and cumulative hazard functions as
\begin{equation*}
	l(\bm{\theta}, \bm{Y}_0) = \sum_{i=1}^{n} \delta_i \log h(t_i \mid \bm{\theta}, \bm{Y}_0) - \sum_{i=1}^{n} H(t_i \mid \bm{\theta}, \bm{Y}_0),
\end{equation*}

Here, $\bm{\theta}$ denotes the vector of model parameters, specified to ensure that $h(t) > 0$ for all $t \ge 0$, and $\bm{Y}_0$ represents the vector of initial conditions that determine the starting state of the hazard dynamics. In this framework, the initial conditions are treated as unknown parameters and estimated jointly with $\bm{\theta}$, since they directly influence the trajectory of the hazard function over time. Although some information about these initial values may be inferred from early observations, specifying them a priori (e.g., fixing $h_0$) is generally not straightforward and may introduce bias or misspecification. This motivates incorporating $\bm{Y}_0$ into the parameter vector and estimating it alongside $\bm{\theta}$.

With this formulation, the log-likelihood can be evaluated for any given parameter values, allowing for maximum likelihood estimation(MLE) of both $\bm{\theta}$ and $\bm{Y}_0$ using general-purpose numerical optimization routines, such as \texttt{optim} or \texttt{nlminb} in \textsf{R}.

While MLE provides a natural starting point for inference, it may be sensitive to data limitations commonly encountered in survival analysis, such as censoring, limited follow-up, or small sample sizes. To address these challenges, a Bayesian approach offers a complementary framework by incorporating prior information and yielding full posterior distributions for the unknown parameters. However, a key component of the Bayesian approach is the specification of prior distributions for the model parameters, which may vary across different hazard structures. While prior selection can be nontrivial, the interpretability of the parameters often provides guidance for constructing informative or weakly informative priors. Given the resulting posterior distribution, inference requires efficient sampling methods. In this setting, posterior computation is tractable because the likelihood can be evaluated, either analytically or numerically, for any parameter configuration. This makes the proposed models well suited for implementation using general-purpose sampling algorithms. Accordingly, posterior inference can be carried out using standard MCMC methods, including Metropolis-within-Gibbs (e.g., \texttt{BUGS}, \texttt{spBayes}), Hamiltonian Monte Carlo (e.g., \texttt{Stan}), and adaptive samplers such as \texttt{twalk} and \texttt{MCMCpack}. Further details of the Monte Carlo implementation are provided in Appendix~\ref{sec:mc-algorithm}.

%%%%%%%%%%%%%%%%%%%%%%%%%%%%%%%%%%%%%%%%%%
%%%%%%%%%%%%%%%%%%%%%%%%%%%%%%%%%%%%%%%%%%

\section{Simulation Study}\label{sec:simulation}

\mbh{In this section, we evaluate the finite-sample performance of the proposed higher-order hazard models described in Section~\ref{sec:non-linear} and assess their ability to capture complex temporal risk patterns. We first examine parameter estimation accuracy under different sample sizes, and then conduct comparative studies against standard parametric models and ODE-based benchmarks related to the formulations of Christen and Rubio \cite{christen2024dynamic, christen2024harmonic}. In particular, the underdamped oscillatory model serves as a linear second-order ODE-based benchmark corresponding to their harmonic oscillator formulation.}

%In this section, we first conduct simulation studies to evaluate the finite-sample performance of the proposed hazard models described in Section~\ref{sec:non-linear}. We then perform a comparison study to assess the relative performance of the damped oscillatory hazard model and the sinusoidal hazard model.

\subsection{Parameter Estimation Performance}

 We generate event times from the proposed models under fixed parameter settings.

\begin{enumerate}
  \item Damped oscillatory hazard model,
  \begin{enumerate}
  \item Underdamped: $\alpha = 0.5, \ \beta = 1, \ \gamma =0.2, \ h_0 = 0.1, \ v_0 = 0.3$
  \item Critically damped: $ \alpha = 3, \ \gamma = 0.2, \ h_0 = 0.1, \ v_0 = 0.3$
  \item Overdamped: $\alpha = 3, \ \beta = 1, \ \gamma = 0.2, \ h_0 = 0.1, \ v_0 = 0.3$
  \end{enumerate}  
  \item Population dynamics–based hazard model: $ r = 0.8, \ \zeta = 0.5, \ K = 1, \ h_0 = 0.1, \ v_0 = 0.2$,
  \item Sinusoidal hazard model: $\omega = 0.2 \pi, \ c = 0.6, \ h_0 = 0.1, \ v_0= 0.2$
  \item Exponential hazard model with interaction effects : $ \alpha = 0.1, \ h_0 = 0.4, \ v_0 = 0.1 $
\end{enumerate}

To assess performance, 20,000 Monte Carlo replications were generated for each model using sample sizes of 500, 1000, 2000, and 5000. Independent right censoring was imposed by generating censoring times from a uniform distribution, with observed times defined as $t=\min(T,C)$ and event indicators $\delta=\mathbb{I}(T \le C)$, resulting in censoring rates of approximately $20\%$ to $30\%$.

For each simulated dataset, model parameters were estimated via  Bayesian framework using the t-walk sampler implemented in R. Weakly informative prior distributions were adopted for all model parameters; in particular, positive-valued parameters were assigned Gamma(2, 2) priors to provide mild regularization while retaining flexibility. To obtain posterior samples, 110,000 Markov chain Monte Carlo (MCMC) iterations were run. The first 10,000 iterations were discarded as burn-in, and a thinning interval of 5 was applied by retaining every fifth draw to reduce autocorrelation, yielding 20,000 samples for posterior inference. Model performance was summarized across Monte Carlo replications using posterior means and root mean squared errors (RMSE) to examine estimation accuracy under different sample sizes.

Table~\ref{tab:udamp_ps} reports the posterior means and RMSEs for each parameter of the underdamped hazard model. As expected, RMSEs decrease with increasing sample size, indicating improved estimation accuracy and consistency. Figure~\ref{fig:hist-u} displays the posterior distributions of the model parameters for the underdamped hazard model when the sample size is
$n=2000$. Compared to the corresponding priors (shown in red), the posterior distributions are more concentrated, reflecting increased information from the data at this sample size. Similar summaries for the remaining models proposed in section ~\ref{sec:non-linear} are provided in the Appendix ~\ref{add:sim.results}. \mbh{These results demonstrate that the proposed higher-order models can be reliably estimated from censored survival data, with estimation accuracy improving as the sample size increases.}

\begin{table}[H]
\centering
\caption{Posterior Results for the underdamped hazard model. True parameter values are shown in parentheses.}
\label{tab:udamp_ps}
\begin{tabular}{ccccccc}
\toprule
 &  & $\alpha\,(0.5)$ & $\beta\,(1)$ & $\gamma\,(0.2)$ & $h_0\,(0.1)$ & $v_0\,(0.3)$ \\
\midrule
$n=200$  & Mean  & 0.5821 & 1.0219 & 0.2003 & 0.0891 & 0.4339 \\
         & RMSE  & 0.2783 & 0.1860 & 0.0553 & 0.0473 & 0.1870 \\
\midrule
$n=500$  & Mean  & 0.7425 & 1.2398 & 0.2700 & 0.0728 & 0.3646 \\
         & RMSE  & 0.3813 & 0.3710 & 0.1080 & 0.0399 & 0.1251 \\
\midrule
$n=1000$ & Mean  & 0.4197 & 1.0985 & 0.2286 & 0.0975 & 0.2597 \\
         & RMSE  & 0.1791 & 0.1511 & 0.0452 & 0.0225 & 0.0714 \\
\midrule
$n=2000$ & Mean  & 0.3778 & 1.0697 & 0.2315 & 0.1010 & 0.2780 \\
         & RMSE  & 0.1540 & 0.0930 & 0.0372 & 0.0166 & 0.0456 \\
\midrule
$n=5000$ & Mean  & 0.5335 & 0.9635 & 0.1885 & 0.0914 & 0.3156 \\
         & RMSE  & 0.0778 & 0.0635 & 0.0186 & 0.0137 & 0.0326 \\
\bottomrule
\end{tabular}
\end{table}

%%%%%%%%%%%%%%%%%%%%%%%%%%%%%%%%%%%%%%%%%%

\begin{figure}[h!]
    \centering
    \includegraphics[width=\linewidth]{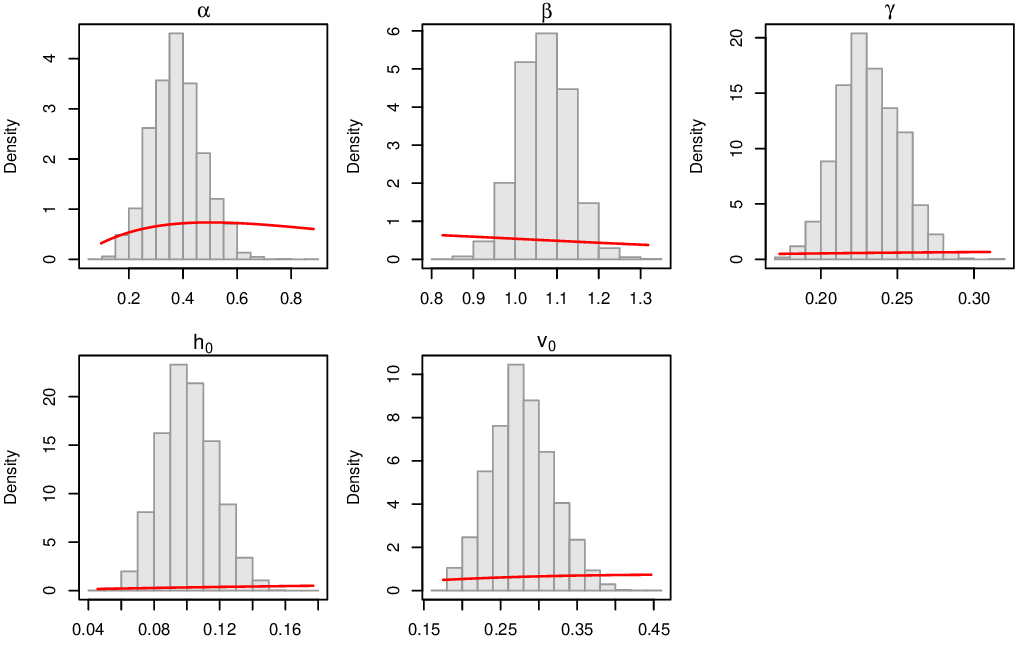}
    \caption{Posterior distributions (n=2000) for the underdamped hazard model parameters. The relevant part
of the prior density is shown in red.}
    \label{fig:hist-u}
\end{figure}

\subsection{Comparison of Hazard Models with the Sinusoidal Hazard Model}
\label{subsec:comparison1}

\cb{In this subsection, we compare the sinusoidal hazard model with several competing hazard models, including standard parametric models (Weibull and lognormal) and the ODE-based oscillatory hazard model presented in Christen and Rubio \cite{christen2024harmonic}, to demonstrate that the sinusoidal hazard uniquely captures sustained, purely oscillatory risk dynamics without damping, a feature that standard parametric and damped oscillatory models cannot represent.}

We generate event times from the sinusoidal hazard model with $\omega = 2\pi/t_c$, $c = 0.15$, $h_0 = 0.15$, and $v_0 = \omega c h_0 \approx 0.0943$. We set $t_c = 4.5$, so that the hazard completes one full periodic cycle every 4.5 years. To incorporate right censoring, censoring times are generated from a $\text{uniform}(0, c_{\max})$ distribution, where $c_{\max}$ is calibrated to achieve approximately $30\%$ censoring.
%For comparison, we fit several competing parametric hazard models, including the lognormal, Weibull, and underdamped oscillatory models, across different sample sizes. 
%\mbh{For comparison, we fit several competing hazard models, including standard parametric models (Weibull and lognormal) as well as ODE-based hazard models related to the formulations of Christen and Rubio \cite{christen2024harmonic}. In particular, the underdamped oscillatory model corresponds to their linear second-order oscillator framework and serves as a representative linear ODE-based benchmark.}
We then fit the competing models to the simulated data using a Bayesian estimation approach for both the sinusoidal and oscillatory hazard models. The initial values for $h_0$ and $c_0$ are set to the empirical average hazard rate, and $v_0$ is initialized at $0.001$. The frequency parameter $\omega$ is fixed at its true value. In practice, $\omega$ may be informed by prior knowledge or by examining previously observed temporal patterns in the data. For the underdamped oscillatory model, we set $\alpha = 0.001$ and choose $\beta$ accordingly to satisfy the underdamping condition. The performance of these models is evaluated using the Bayesian information criterion (BIC):
\[
BIC = k \log(n) - 2l
\]
where $n$ is the sample size, $k$ is the number of parameters, and $\ell$ is the maximized log-likelihood. The results are summarized in Table~\ref{tab:bic}, and as expected, when the underlying hazard exhibits stable cyclic behavior with constant amplitude, the BIC favors the sinusoidal hazard model, indicating a more appropriate fit. This occurs because the underdamped oscillatory hazard model represents periodic behavior with a decaying amplitude over time, whereas the sinusoidal hazard model captures periodic patterns with constant amplitude. Figure~\ref{fig:ht.comparison} illustrates the estimated hazard functions from the competing models along with the true hazard for $n=2000$, showing that the sinusoidal model closely follows the underlying periodic pattern. This highlights the importance of selecting a model whose dynamical structure aligns with the underlying risk mechanism, a key advantage of the proposed framework. \mbh{In this sense, the comparison with the linear oscillatory benchmark highlights the additional flexibility of the proposed framework in representing periodic hazard behavior beyond decaying oscillatory dynamics.}

\begin{table}[ht]
\centering
\caption{BIC comparison of hazard models with Sinusoidal hazard model under varying sample sizes}
\label{tab:bic}
\begin{tabular}{lccc}
\hline
Model & $n=500$ & $n=1000$ & $n=2000$ \\
\hline
Sinusoidal hazard model & 2101.21 & 4204.88 & 8450.89\\
Underdamped oscillatory model$^{*}$ & 2112.19 & 4220.57 & 8461.73 \\
Weibull model & 2109.18 & 4218.99 & 8535.79\\
Lognormal model & 2160.43 & 4296.33 & 8662.27 \\
\hline
\end{tabular}
\vspace{2pt}

{\footnotesize $^{*}$Comparison with ODE-based oscillatory hazard model in Christen and Rubio \cite{christen2024harmonic}.}
\end{table}

\begin{figure}[h!]
    \centering
    \includegraphics[scale=0.7]{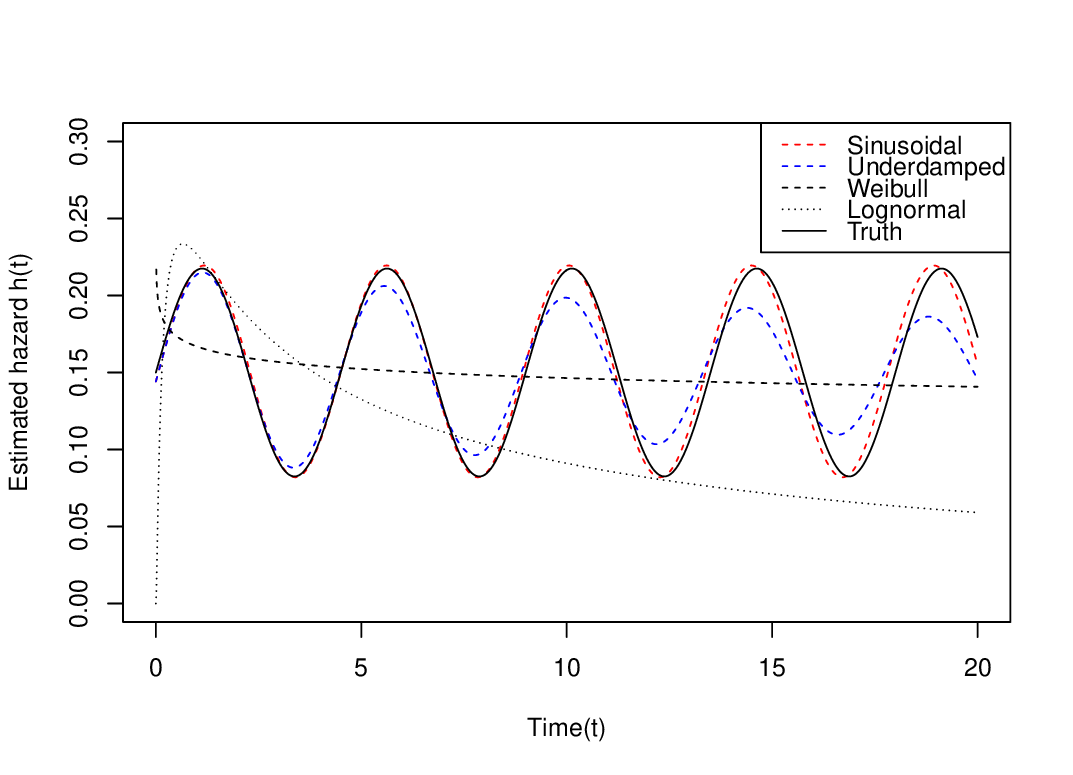}
    \caption{Comparison of competing hazard functions with Sinusoidal hazard model ($n=2000$).}
    \label{fig:ht.comparison}
\end{figure}

\subsection{Comparison of Hazard Models with the Population Dynamics–Based Hazard Model}

In this subsection, we compare the proposed population dynamics-based hazard model (second-order logistic) with the first-order logistic and underdamped oscillatory hazard models through simulation. The study focuses on a scenario with transient non-monotone hazard dynamics, where the second-order formulation captures behavior not adequately represented by monotone logistic or purely oscillatory models.

We generate event times from the population dynamics--based hazard model with $r = 0.3$, $\zeta = 1.4$, $K = 0.03$, $h_0 = 0.06$, and $v_0 = 0.05$. To incorporate right censoring, censoring times are generated from a uniform  distribution, resulting in approximately $30\%$ right censoring. We used a Bayesian approach to estimate the hazard models. As described earlier (Subsection \ref{subsec:comparison1}), initial parameter values were specified for the underdamped oscillatory hazard model. For the population dynamics–based model and the logistic hazard model, initial parameter values were chosen using simple summaries of the observed event times to aid numerical optimization. The carrying capacity parameter was initialized using the inverse of the first quartile of the event times, while the growth parameter was initialized using the inverse squared median event time. The initial hazard level was set using the empirical event rate, and a small positive value (0.001) was used for both the initial velocity $v_0$ and the damping parameter $\zeta$. Table~\ref{tab:bic2} presents the BIC values for the competing hazard models across different sample sizes. The population dynamics–based hazard model consistently attains the lowest BIC, indicating superior fit compared \mbh{compared to both standard parametric models and competing ODE-based formulations, including the linear oscillatory benchmark \cite{christen2024harmonic}. This demonstrates the advantage of nonlinear higher-order dynamics in capturing transient, non-monotone hazard behavior that cannot be adequately represented by monotone or purely oscillatory models.}

\begin{table}[ht]
\centering
\caption{BIC comparison of hazard models with Population dynamics-based model under varying sample sizes}
\label{tab:bic2}
\begin{tabular}{lccc}
\hline
Model & $n=500$ & $n=1000$ & $n=2000$ \\
\hline
Population dynamics-based model & 3018.38 & 6093.31 & 12073.78\\
Logistic growth model$^{^{\dagger}}$ & 3123.82 & 6121.31 &  12135.57\\
Underdamped oscillatory model$^{*}$ & 3022.14 & 6102.65 & 12091.49\\
Weibull model & 3030.36 & 6139.74 & 12183.75\\
Lognormal model & 3070.07 & 6222.96 & 12334.44\\
\hline
\end{tabular}

{\footnotesize $^{^{\dagger}}$Comparison with Logistic growth hazard model in Christen and Rubio \cite{christen2024dynamic}.\\$^{*}$Comparison with ODE-based oscillatory hazard model in Christen and Rubio \cite{christen2024harmonic}.}
\end{table}

\begin{figure}[h!]
    \centering
    \includegraphics[scale = 0.7]{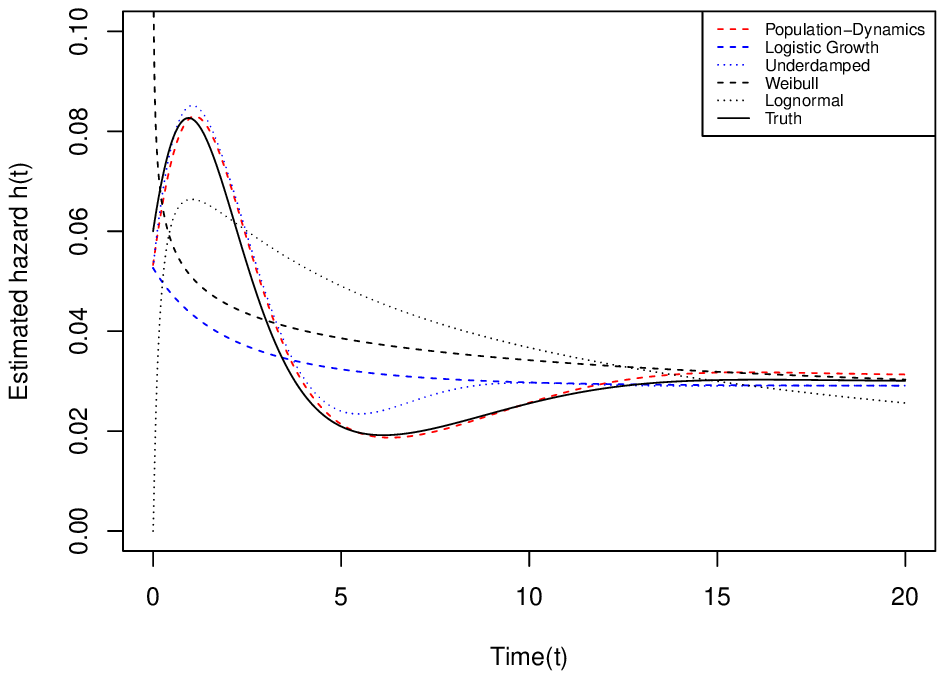}
    \caption{Comparison of competing hazard functions with Population dynamics-based hazard model ($n=2000$).}
    \label{fig:ht.comparison2}
\end{figure}

%%%%%%%%%%%%%%%%%%%%%%%%%%%%%%%%%%%%%%%%%%
%%%%%%%%%%%%%%%%%%%%%%%%%%%%%%%%%%%%%%%%%%
\section{Real Data Application}\label{sec:real_data}

In this section, we analyze a dataset from a clinical trial in gastric oncology that contains right-censored survival data. The gastric cancer data were collected by the Gastrointestinal Tumor Study Group in 1982 and are freely accessible through the R package \texttt{coxphw}. The dataset includes information on 90 patients diagnosed with locally advanced gastric cancer. First, we inspected the observed event times to assess the pattern of failures. The events appear to be concentrated in the early to intermediate follow-up period, with fewer occurring later, suggesting that the risk of failure may vary over time rather than remaining constant. 

Standard parametric models such as the Weibull and lognormal distributions impose restrictive hazard shapes, with the Weibull allowing only monotone hazards and the lognormal producing a single peak. These forms may not adequately capture more complex time-varying behavior. Therefore, more flexible hazard formulations, such as sinusoidal and oscillatory models, may be appropriate for describing the underlying risk pattern. Therefore, we employ underdamped oscillatory, sinusoidal, and population dynamics hazard models and compare them with logistic growth, Weibull, and lognormal hazard models.

To choose the initial parameters, we employ simple data-driven values that are consistent with the local interpretation of the model. Specifically, the initial parameter values are chosen to correspond to the survival function values $S(\Delta t)$, $S(2\Delta t)$, and $S(3\Delta t)$ for a small time step $\Delta t$. In this analysis, $\Delta t$ is selected as two weeks. The corresponding survival probabilities are computed directly from the observed data using the proportion of individuals who have not experienced the event by the given time. Since the sample size is $n=90$ and the cumulative number of observed failures by days $14$, $28$, and $42$ is $1$, $2$, and $3$, respectively, the empirical survival probabilities are
\[
S_0 = 1, \qquad 
S_1 = \frac{89}{90} = 0.9889, \qquad 
S_2 = \frac{88}{90} = 0.9778, \qquad 
S_3 = \frac{87}{90} = 0.9667.
\]

Using forward finite-difference approximations of the survival function, we estimate the initial hazard $h_0 = h(0)$ and its first derivative $v_0 = h'(0)$. Let $S_0 = S(0)$, $S_1 = S(\Delta t)$, and $S_2 = S(2\Delta t)$. Then
\[
S'(0) \approx \frac{S_1 - S_0}{\Delta t}, \qquad 
S''(0) \approx \frac{S_2 - 2S_1 + S_0}{\Delta t^2}.
\]

Using the relationship $h(t) = -\frac{S'(t)}{S(t)}$, the initial hazard and its derivative are approximated by
\[
h_0 \approx -\frac{S_1 - S_0}{\Delta t\,S_0}, \qquad
v_0 = h'(0) \approx 
\left(\frac{S_1 - S_0}{\Delta t\,S_0}\right)^2 
- \frac{S_2 - 2S_1 + S_0}{\Delta t^2\,S_0}.
\]

These approximations provide reasonable starting values that reflect the early-time behavior of the hazard function implied by the observed data.

For the sinusoidal model, we use the relationship between the model parameters and the derivatives of the hazard function at $t=0$. In particular, the parameter $\omega$ is obtained from the model definition
\[
\omega^2 = - \frac{h''(0)}{h_0 - c}.
\]

For the oscillatory hazard model, the parameter $\omega$ is chosen based on the spread of the observed event times using the interquartile range. A small positive value is selected for $\alpha = 0.001$, and $\beta$ is then determined from $\omega$ and $\alpha$. The parameter $\gamma$ controls the overall scale of the hazard function and is chosen so that the model reflects the empirical hazard rate estimated from the first two weeks of follow-up.

For the logistic growth and population dynamics hazard models, the carrying capacity parameter $K$ is initialized from the tail of the observed event times. Specifically, we consider observations beyond the third quartile and estimate the hazard rate in this period as the number of failures divided by the total person-time at risk. The growth parameter $r$ is initialized using the reciprocal of the squared mean event time. Given the previously estimated values of $h_0$, $h'(0)$, and $h''(0)$, the damping parameter $\zeta$ is then obtained from the model relationship so that the early behavior of the hazard function matches the empirical derivatives estimated from the data.

The initial parameter values were first used to obtain MLEs via \texttt{nlminb}, which were subsequently used as starting values for the Bayesian analysis implemented with \texttt{twalk}. Model performance was evaluated using the Akaike Information Criterion (AIC), the corrected Akaike Information Criterion (AICc), and the Bayesian Information Criterion (BIC). Because the sample size was moderate ($n = 90$), AICc was included to account for potential small-sample bias \citep{Burnham2002}. As shown in Table \ref{tab:ra.model.comp}, the sinusoidal hazard model has the smallest AIC and AICc values, indicating the best fit among the models considered, while the Weibull model has the lowest BIC due to its simpler structure. The hazard functions implied by the fitted models are shown in Figure~\ref{fig:real.comparison}.

\begin{table}[H]
\centering
\caption{Comparison of hazard models for the gastric cancer data based on AIC, AICc, and BIC.}
\begin{tabular}{lccc}
\hline
Model & AIC & AICc & BIC \\
\hline
Sinusoidal model              & 251.29 & 251.76 & 260.29 \\
Underdamped oscillatory model & 255.22 & 255.93 & 267.72 \\
Logistic growth model         & 260.62 & 260.90 & 268.12 \\
Population dynamics-based model & 255.92 & 256.63 & 268.41 \\
Weibull model                 & 254.87 & 255.01 & 259.87 \\
Lognormal model               & 262.06 & 262.19 & 267.05 \\
\hline
\end{tabular}

{\footnotesize $^{^{\dagger}}$Comparison with Logistic growth hazard model in Christen and Rubio \cite{christen2024dynamic}.\\$^{*}$Comparison with ODE-based oscillatory hazard model in Christen and Rubio \cite{christen2024harmonic}.}
\label{tab:ra.model.comp}
\end{table}

\begin{figure}[H]
    \centering
    \includegraphics[scale=0.65]{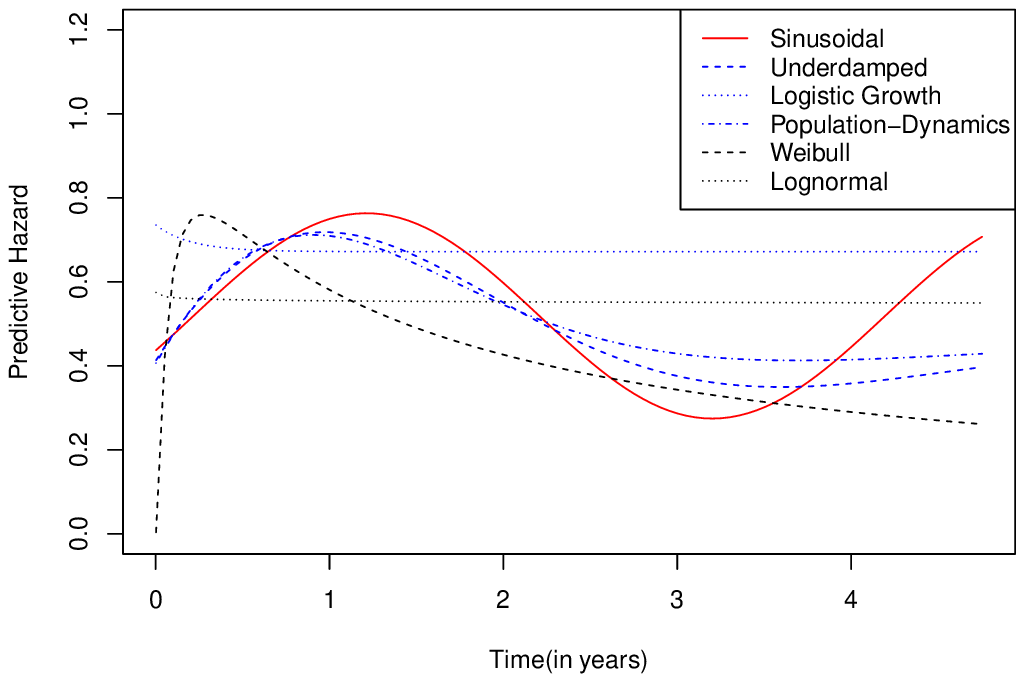}
    \caption{Estimated hazard functions for the fitted models using the gastric cancer dataset.}
    \label{fig:real.comparison}
\end{figure}

\mbh{Overall, these results demonstrate that the proposed higher-order hazard models provide a flexible and effective framework for modeling complex time-varying risk patterns in real-world data, offering improvements over both classical parametric models and linear oscillatory ODE-based benchmarks \cite{christen2024harmonic}, particularly in capturing non-monotonic and nonlinear hazard dynamics.}

%Overall, these results demonstrate that the proposed higher-order hazard models provide a flexible and effective framework for modeling complex time-varying risk patterns in real-world data, offering improvements over both classical parametric models and simpler ODE-based formulations.

\section{Conclusion}\label{sec: Dis_conclusion}

This work advances dynamical survival analysis by introducing a class of hazard models governed by higher-order ODEs. By allowing the hazard to evolve according to both its current level and its temporal derivatives, the proposed framework substantially broadens the range of risk dynamics that can be represented within a coherent survival modeling paradigm. This is particularly relevant in medical and biomedical applications, where risk evolves over time in response to treatment, disease progression, or physiological adaptation.  \mbh{In contrast to conventional first-order or monotone hazard models, higher-order formulations accommodate inertia, delayed responses, and oscillatory behavior, yielding hazard trajectories and survival distributions that are otherwise inaccessible.}

Through a collection of nonlinear and oscillatory examples, we demonstrate how interpretable dynamical mechanisms translate directly into flexible hazard behavior and complex time-to-event patterns. The accompanying numerical framework enables practical implementation across model classes, supporting solution of higher-order systems, evaluation of cumulative hazards, likelihood-based inference under censoring, and simulation of event times. Together, these components establish higher-order ODE-based hazards as a viable and expressive alternative to static or memoryless survival models, particularly in applications characterized by feedback-driven or temporally adaptive risk.

Despite its flexibility, the proposed framework presents several challenges. Higher-order hazard models introduce additional parameters and latent dynamics, which may complicate identifiability and increase computational cost, particularly in data-limited settings. Model specification requires careful consideration of parameter choices, dynamical structure, and time-scale normalization to ensure the existence of a positive hazard function and meaningful temporal interpretation. In particular, the inclusion of an explicit scale parameter is essential for preserving hazard shape across different time scales~\cite{christen2024dynamic}, consistent with standard parametric survival modeling practice. Inappropriate modeling choices may lead to instability or overfitting. Moreover, the present work focuses primarily on deterministic hazard dynamics, and stochastic perturbations or measurement noise are not explicitly modeled.

These limitations point to promising directions for future work. \mbh{Extensions to stochastic differential equation formulations would allow the incorporation of random fluctuations in hazard dynamics, while the inclusion of covariates would enhance applicability in regression settings. Additional work is also needed to further investigate identifiability, model selection, and computational efficiency in high-dimensional settings.}

Overall, higher-order dynamical hazard models offer a principled and flexible approach for modeling complex temporal risk patterns, \mbh{offering clear advantages over both classical parametric models and existing ODE-based formulations in settings where risk evolves through feedback, memory, and adaptation.}

%%%%%%%%%%%%%%%%%%%%%%%%%%%%%%%%%%%%%%%%%%
\subsection*{Code Availability}

The code used in this study is available upon request.

%%%%%%%%%%%%%%%%%%%%%%%%%%%%%%%%%%%%%%%%%%
\subsection*{Declaration of Interests}
The author has no conflicts of interest to report.

%%%%%%%%%%%%%%%%%%%%%%%%%%%%%%%%%%%%%%%%%%
\subsection*{Contributions}
Authors DL, MBH, and FM contributed to the formulation and analysis of the models, graphical presentations, data curation, and the simulation study; FM conceptualized and directed the research. All authors contributed to the literature search, model evaluations, and the writing of the paper.

%%%%%%%%%%%%%%%%%%%%%%%%%%%%%%%%%%%%%%%%%%
\subsection*{Ethics Approval}

There is no ethical approval needed due to the use of simulated and publicly available data.

%%%%%%%%%%%%%%%%%%%%%%%%%%%%%%%%%%%%%%%%%%
\subsection*{Funding Statement}

The authors do not have funding to report.

%%%%%%%%%%%%%%%%%%%%%%%%%%%%%%%%%%%%%%%%%%
\setlength{\bibsep}{0pt}
\bibliographystyle{unsrt}
\bibliography{bibfile}

\newpage
%%%%%%%%%%%%%%%%%%%%%%%%%%%%%%%%%%%%%%%%%%
%%%%%%%%%%%%%%%%%%%%%%%%%%%%%%%%%%%%%%%%%%
\appendix
%%%%%%%%%%%%%%%%%%%%%%%%%%%%%%%%%%%%%%%%%%
%%%%%%%%%%%%%%%%%%%%%%%%%%%%%%%%%%%%%%%%%%
\section{Monotonicity and Nonmonotonicity in ODEs}

For first-order ODEs, monotonicity of solutions is closely tied to the sign of the derivative along solution trajectories. Consider an autonomous first-order ODE
\[
h'(t) = \psi(h(t)), \quad h(t_0) = h_0.
\]
If \( \psi(h(t)) \) maintains a constant sign along the trajectory of the solution, then \( h(t) \) is monotonic on its interval of existence. For example, when \( \psi(h) = k h \) with \( k > 0 \), the solution \( h(t) = h_0 e^{k(t-t_0)} \) is strictly increasing. Conversely, if \( \psi(h(t)) \) changes sign along the trajectory, the solution may exhibit nonmonotonic behavior, such as local maxima or minima, depending on the structure of \( \psi \).

First-order ODEs have been widely used to model hazard dynamics in survival analysis, providing a unified perspective on classical models such as the proportional hazards, linear transformation, and accelerated failure time models \cite{cox2007parametric, liu2012survival}. Within this framework, the evolution of the hazard is governed solely by its current level, which imposes strong structural constraints on admissible temporal behavior. In particular, first-order autonomous systems naturally favor monotonic or asymptotically monotonic hazard trajectories.

In contrast, monotonicity becomes substantially more restrictive in higher-order ODEs. For instance, consider the second-order linear ODE
\[
h''(t) = -k h(t), \quad h(t_0) = h_0, \quad h'(t_0) = v_0,
\]
with \( k > 0 \). Its solutions take the form
\[
h(t) = A \cos(\sqrt{k}t) + B \sin(\sqrt{k}t),
\]
which are inherently oscillatory and therefore nonmonotonic. More generally, higher-order ODEs introduce additional state variables, such as the hazard derivative, that act as latent dynamical components. Even when the system is autonomous, these hidden states permit feedback, inertia, and delayed responses, making nonmonotonic behavior generic rather than exceptional.

Nonlinear higher-order systems further expand the range of possible dynamics. For example, the nonlinear second-order equation
\[
h''(t) + h(t)^3 = 0
\]
admits bounded oscillatory solutions whose amplitude and frequency depend on initial conditions. Such dynamics can generate transient risk increases, cyclic hazard patterns, and delayed effects that cannot be easily captured within first-order formulations.

From the perspective of hazard modeling, nonmonotonic hazard functions may arise either through explicit time dependence or through higher-order dynamics that implicitly encode memory via additional state variables. Second-order and higher-order ODEs thus provide a natural mechanism for generating nonmonotone hazards without requiring nonautonomous forcing. While many commonly used hazard models can be expressed as solutions to first-order equations under specific boundary conditions such as \( \lim_{t \to 0} h(t) = 0 \) or \( \infty \), higher-order formulations substantially extend this class by enabling richer temporal behavior.

%%%%%%%%%%%%%%%%%%%%%%%%%%%%%%%%%%%%%%%%%%
\section{Analytical Solutions of the Damped Oscillatory Hazard Model}\label{app:closeform_damped}

\subsubsection*{Case 1: Underdamped $(\Delta <0)$}

\noindent For the underdamped case $ \Delta = \alpha^2 - 4\beta < 0 $, the characteristic equation has complex conjugate roots,
\[
r_{1,2} = -\frac{\alpha}{2} \pm i\omega, \quad \omega=\tfrac12\sqrt{4\beta-\alpha^{2}}.
\]
\noindent To solve Eq.~\eqref{eq:osc.ht}, the general solution for $h(t)$ in the case $(\Delta <0)$ can be written as

\[
h(t)=e^{-\frac{\alpha t}{2}}\big(A\cos(\omega t)+B\sin(\omega t)\big)+h^*, 
\quad h^*=\frac{\gamma}{\beta},
\]

\noindent where $h^*$ is the equilibrium hazard. Applying the initial conditions $h_0 = h(0)$ and $ v_0 = h^\prime(0)$, the constants $A$ and $B$ are obtained as

\[
\begin{aligned}
A &= h_0 - \dfrac{\gamma}{\beta}, \quad 
B = \dfrac{1}{\omega} \left(v_0 + \dfrac{\alpha}{2} \big(h_0 -  \dfrac{\gamma}{\beta}\big)\right).
\end{aligned}
\]

\noindent The value $h_0$ represents the initial hazard level, and $v_0$ denotes the initial rate of change of the hazard. The coefficient $\alpha$ is the damping parameter, which determines the rate at which the oscilations decay over time. The term $\omega$ is the angular frequency that determines the speed of oscillation, and $\gamma/ \beta$ is the long-run equilibrium hazard level.

\subsubsection*{Case 2: Critically Damped (\( \Delta = 0 \))}

\noindent In the critically damped case \( \Delta = 0 \), the characteristic equation has a repeated root $r=-\alpha /2$. The solution of Eq.~\eqref{eq:osc.ht} therefore takes the form:

\[
h(t) = (A + Bt)e^{-\alpha t/2} + \frac{\gamma}{\beta}.
\]

\noindent The constants $A$ and $B$ follow from the intial conditions $h_0 = h(0)$ and $ v_0 = h^\prime(0)$:

\[
A = h_0 - \dfrac{\gamma}{\beta}, 
\quad
B = v_0 + \dfrac{\alpha}{2} \left(h_0 - \dfrac{\gamma}{\beta} \right)
\]

\noindent In this setting, $\alpha$ serves as the critical damping coefficient and determines the rate at 
which the hazard approaches equilibrium without oscillation. Under the critical damping condition, 
the restoring coefficient $\beta$ satisfies $\beta = \alpha^{2}/4$.

\subsubsection*{Case 3: Overdamped (\( \Delta > 0 \))}

\noindent For the overdamped case \( \Delta > 0 \), the characteristic equation has two real and distinct roots,
\[
r_{1,2} = -\frac{\alpha}{2} \pm \sqrt{\frac{\alpha^{2}}{4} - \beta}.
\]

\noindent The corresponding solution to Eq.~\eqref{eq:osc.ht} is
\[
h(t) = A e^{r_{1} t} + B e^{r_{2} t} + \frac{\gamma}{\beta},
\]

\noindent where the constants $A$ and $B$ are determined by the initial conditions 
$h_{0} = h(0)$ and $v_{0} = h'(0)$. In this case, the hazard returns to its equilibrium level $\gamma/\beta$ without oscillation, 
but at a slower rate than under critical damping. The two exponential terms reflect two 
different decay speeds associated with the distinct roots $r_{1}$ and $r_{2}$.

For each case described, the cumulative hazard function $H(t)$ is obtained by integrating the corresponding hazard function $h(t)$. In the underdamped case $(\Delta < 0)$, this yields:

\[
  H(t) = \int_0^t \left[ e^{-\alpha u/2} \big(A \cos(\omega u) + B \sin (\omega u)\big) + \dfrac{\gamma}{\beta}\right] \, du. 
\]
\noindent This expression simplifies to the following closed form:

\[
\begin{aligned}
 H(t) &= \dfrac{\gamma}{\beta}\,t
+\dfrac{A}{\beta}\left[
\dfrac{\alpha}{2}
+e^{-\alpha t/2}\left(
-\dfrac{\alpha}{2}\cos(\omega t)+\omega\sin(\omega t)
\right)\right]\\
&+\dfrac{B}{\beta}\left[
\omega
+e^{-\alpha t/2}\left(
-\dfrac{\alpha}{2}\sin(\omega t)-\omega\cos(\omega t)
\right)\right].
\end{aligned}
\]   

\noindent The density function is then obtained by combining the hazard and cumulative hazard functions as $f(t) = h(t)\exp\{-H(t)\}$. Then, using the inverse transform sampling approach, event times can be simulated for the underdamped case. A similar numerical procedure applies to the remaining cases, with the specific implementation depending on the forms of $h(t)$ and $H(t)$.

%%%%%%%%%%%%%%%%%%%%%%%%%%%%%%%%%%%%%%%%%%
\section{Proof of Theorem~\ref{thm:osc.longterm}}\label{app:osc-proof}

\begin{proof}
\begin{enumerate}

\item Eq.~\ref{eq:osc.ht} is a linear ODE with constant coefficients. By standard existence and uniqueness theory for linear systems, for any $h_0>0$ and $v_0\in\mathbb{R}$ there is a unique solution $h(t)$ defined for all $t\ge 0$.

\item Set $h^*=\gamma/\beta$ and define $y(t)=h(t)-h^*$. Substituting gives
\[
y''(t)+\alpha y'(t)+\beta y(t)=0.
\]
Its characteristic equation is $r^2+\alpha r+\beta=0$, with roots
\[
r_{1,2}=\frac{-\alpha\pm\sqrt{\alpha^2-4\beta}}{2}.
\]
If $\alpha>0$ and $\beta>0$, then $\Re(r_1),\Re(r_2)<0$, so $y(t)\to 0$ as $t\to\infty$, hence $h(t)\to h^*=\gamma/\beta$.

\item The qualitative form follows from the roots: if $\Delta=\alpha^2-4\beta<0$ the roots are complex with real part $-\alpha/2$, giving damped oscillations; if $\Delta=0$ there is a repeated root $-\alpha/2$, giving critical damping; if $\Delta>0$ there are two distinct negative real roots, giving an overdamped sum of decaying exponentials.
\end{enumerate}
\end{proof}

%%%%%%%%%%%%%%%%%%%%%%%%%%%%%%%%%%%%%%%%%%
\section{Linear Case of the Exponential Hazard Model}
\label{app:exponential}

When $\beta = 0$, the interaction term vanishes and the model reduces to the linear second-order equation
\begin{equation}
\label{eq:hazard-exp0}
 h''(t) = \alpha h(t).   
\end{equation}

For $\alpha > 0$, the solution of Eq.~\eqref{eq:hazard-exp0} is a linear combination of exponentially increasing and decreasing components, and the long-term behavior of $h(t)$ is dominated by the exponentially increasing component. When $\alpha < 0$, the solution is oscillatory. We therefore restrict attention to $\alpha > 0$, which isolates the exponential mechanism and provides a natural reference for comparison with the model in Eq.~\eqref{eq:expo_grow} when $\beta > 0$. The general solution of Eq.~\eqref{eq:hazard-exp0} in this case can be written as
\[
h(t) = A e^{\sqrt{\alpha}\, t} + B e^{-\sqrt{\alpha}\, t},
\]
where the constants $A$ and $B$ are determined by the initial conditions
$h(0)=h_0$ and $h'(0)=v_0$. Solving for $A$ and $B$ gives
\[
A = \frac{1}{2}\left(h_0 + \frac{v_0}{\sqrt{\alpha}}\right),
\qquad
B = \frac{1}{2}\left(h_0 - \frac{v_0}{\sqrt{\alpha}}\right).
\]
Thus,
\[
h(t)
= \frac{1}{2}\left(h_0 + \frac{v_0}{\sqrt{\alpha}}\right)e^{\sqrt{\alpha}\, t}
+ \frac{1}{2}\left(h_0 - \frac{v_0}{\sqrt{\alpha}}\right)e^{-\sqrt{\alpha}\, t}.
\]

\noindent For $\alpha>0$, the sign of $h(t)$ is determined by the coefficients in its exponential representation. Since $e^{\pm\sqrt{\alpha}t}>0$ for all $t\ge0$, $h(t)$ remains positive for all $t\ge0$ if and only if
\[
h_0+\frac{v_0}{\sqrt{\alpha}} \ge 0
\quad\text{and}\quad
h_0-\frac{v_0}{\sqrt{\alpha}} \ge 0,
\]
or equivalently,
\[
h_0 \ge \frac{|v_0|}{\sqrt{\alpha}}.
\]
If this condition is violated, one of the coefficients becomes negative and
the resulting solution eventually takes negative values, which is
incompatible with the definition of the hazard function. Moreover, when
$h_0>0$, $\alpha>0$, and $v_0<0$ satisfy $h_0=-v_0/\sqrt{\alpha}$, the
coefficient of the growing exponential term vanishes, and the hazard reduces
to
\[
h(t)=h_0 e^{-\sqrt{\alpha}t}.
\]
which is positive and strictly decreasing for all $t\ge0$ (see Figure~\ref{fig:plot-exp0-1}). As $t\to\infty$, $h(t)$ converges to zero and $H(t)$ approaches the finite limit $h_0/\sqrt{\alpha}$. Consequently, $S(t)$ converges to a positive constant, indicating an improper survival distribution with a nonzero long-term survival probability. With $h(t)$ available in closed form, the cumulative hazard function for Eq ~\eqref{eq:hazard-exp0} is given by
\begin{equation*}
    H(t) = \frac{1}{2\sqrt{\alpha}} \left(h_0 + \frac{v_0}{\sqrt{\alpha}}\right) \left( e^{\sqrt{\alpha} t} - 1 \right)
+
\frac{1}{2\sqrt{\alpha}} \left(h_0 - \frac{v_0}{\sqrt{\alpha}}\right) \left( 1 - e^{-\sqrt{\alpha} t} \right)
\end{equation*}

\noindent Using $h(t)$ and $H(t)$, the pdf of the event time can be derived and used for simulation.

%%%%%%%%%%%%%%%%%%%%%%%%%%%%%%%%%%%%%%%%%%
\begin{figure}[H]
    \centering
    \includegraphics[scale=0.7]{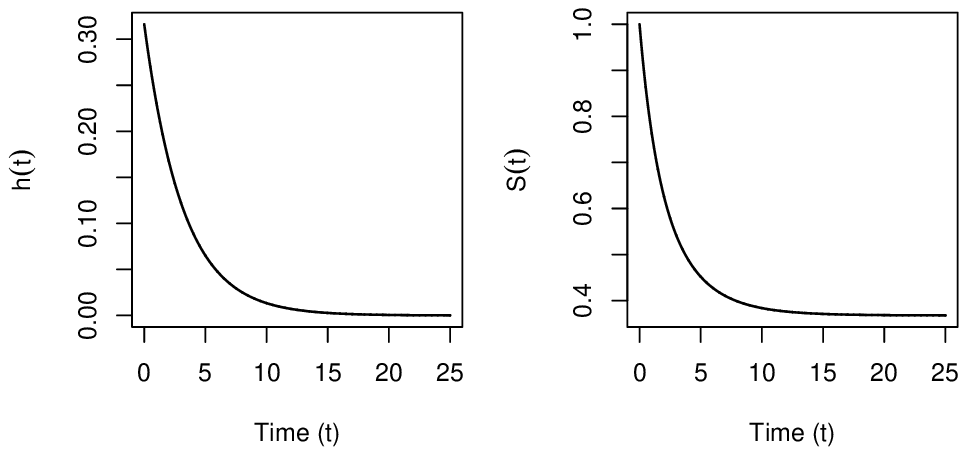}
    \caption{Hazard function $h(t)$ and survival function $S(t)$ for the exponential hazard model ($\alpha = 0.1,\ v_0 = -0.1$).}
    \label{fig:plot-exp0-1}
\end{figure}
%%%%%%%%%%%%%%%%%%%%%%%%%%%%%%%%%%%%%%%%%%

%%%%%%%%%%%%%%%%%%%%%%%%%%%%%%%%%%%%%%%%%%
\section{Monte Carlo Simulation for Nonlinear Hazard Models}
\label{sec:mc-algorithm}

\begin{algorithm}[H] 
\caption{Monte Carlo Simulation for Nonlinear Hazard Models}
\begin{algorithmic}[2]
\REQUIRE Parameter settings for each hazard model, sample sizes $n \in \{500, 1000, 2000, 5000\}$, number of replications $R = 20{,}000$
\ENSURE Posterior summaries and RMSEs for each model and sample size

\FOR{each hazard model (Damped Oscillatory, Population Dynamics, Sinusoidal, Exponential with interaction)}
    \STATE Set true parameter values according to model specification
    \FOR{each sample size $n$}
        \FOR{replication $r = 1$ to $R$}
            \STATE Simulate $n$ true event times $\{T_i\}_{i=1}^n$ from the model using numerical solver of second-order ODE
            \STATE Generate $n$ censoring times $\{C_i\}_{i=1}^n$ independently from a $\text{Uniform}(0, c_{\max})$ distribution, tuned to yield 20–30\% censoring
            \STATE Define observed times: $t_i = \min(T_i, C_i)$ and event indicators: $\delta_i = \mathbb{I}(T_i \le C_i)$
            \STATE Fit the model using Bayesian inference:
                \begin{itemize}
                    \item Use weakly informative priors (e.g., $\text{Gamma}(2,2)$ for positive parameters)
                    \item Run t-walk MCMC sampler for 10,000 iterations
                    \item Discard the first 10,000 iterations as burn-in
                    \item Apply thinning (retain every 5th draw) to yield 20,000 posterior samples
                \end{itemize}
            \STATE Compute posterior means for model parameters
        \ENDFOR
        \STATE Compute root mean squared error (RMSE) and posterior summary statistics (e.g., mean, SD) across the $R$ replications
    \ENDFOR
\ENDFOR

\RETURN Posterior means, standard deviations, and RMSEs for each model and sample size
\end{algorithmic}
\end{algorithm}

%%%%%%%%%%%%%%%%%%%%%%%%%%%%%%%%%%%%%%%%%%
%%%%%%%%%%%%%%%%%%%%%%%%%%%%%%%%%%%%%%%%%%

\section{Additional Simulation Results}\label{add:sim.results}
\subsection{Damped hazard model}

%%%%%%%%%%%%%%%%%%%%%%%%%%%%%%%%%%%%%%%%%%
\begin{table}[H]
\centering
\caption{Posterior results for the critically damped hazard model. 
True parameter values are shown in parentheses.}
\label{tab:cdamp_ps}
\begin{tabular}{cccccc}
\toprule
 &  & $\alpha\,(2)$ & $\gamma\,(0.2)$ & $h_0\,(0.1)$ & $v_0\,(0.3)$ \\
\midrule
$n=200$  & Mean  & 2.9992 & 0.5213 & 0.1679 & 0.2120 \\
         & RMSE  & 1.3782 & 0.4591 & 0.0820 & 0.1901 \\
\midrule
$n=500$  & Mean  & 2.5740 & 0.3945 & 0.1572 & 0.2391 \\
         & RMSE  & 1.0176 & 0.3341 & 0.0672 & 0.1532 \\
\midrule
$n=1000$ & Mean  & 2.5427 & 0.3817 & 0.1258 & 0.2625 \\
         & RMSE  & 0.9691 & 0.3151 & 0.0388 & 0.1186 \\
\midrule
$n=2000$ & Mean  & 2.4550 & 0.3380 & 0.1081 & 0.2940 \\
         & RMSE  & 0.7853 & 0.2500 & 0.0216 & 0.0881 \\
\midrule
$n=5000$ & Mean  & 1.9391 & 0.2017 & 0.1004 & 0.2803 \\
         & RMSE  & 0.4382 & 0.1110 & 0.0129 & 0.0635 \\
\bottomrule
\end{tabular}
\end{table}

%%%%%%%%%%%%%%%%%%%%%%%%%%%%%%%%%%%%%%%%%%

\begin{figure}[H]
    \centering
    \includegraphics[scale = 1]{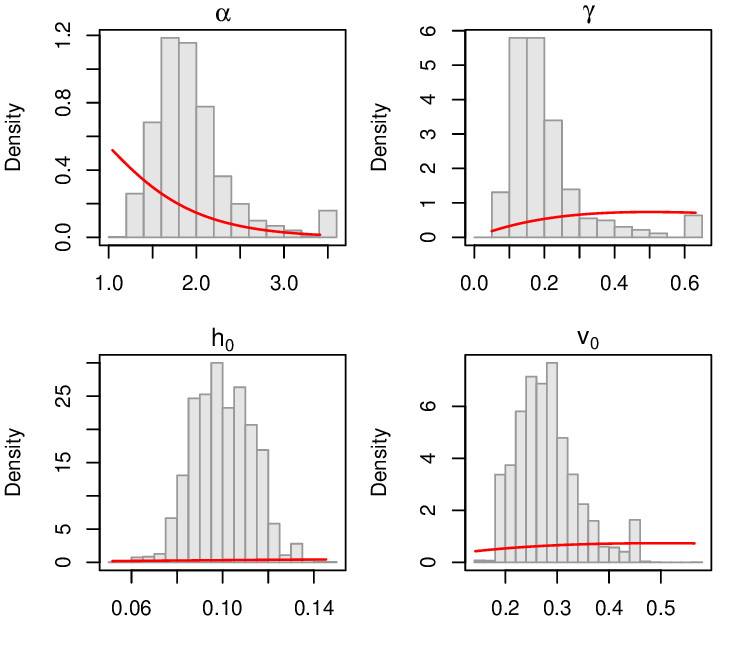}
    \caption{Posterior distributions (n=2000) for the critically damped hazard model parameters. The relevant part of the prior density is shown in red.}
    \label{fig:hist-c}
\end{figure}

%%%%%%%%%%%%%%%%%%%%%%%%%%%%%%%%%%%%%%%%%%

\begin{table}[H]
\centering
\caption{Posterior results for the overdamped hazard model. True parameter values are shown in
parentheses.}
\label{tab:odamp_ps}
\begin{tabular}{ccccccc}
\toprule
 &  & $\alpha\,(3)$ & $\beta\,(1)$ & $\gamma\,(0.2)$ & $h_0\,(0.1)$ & $v_0\,(0.3)$ \\
\midrule
$n=200$  & Mean  & 2.8125 & 1.1434 & 0.2326 & 0.0973 & 0.4313 \\
         & RMSE  & 0.8777 & 0.6084 & 0.1412 & 0.0519 & 0.2473 \\
\midrule
$n=500$  & Mean  & 2.7514 & 1.0504 & 0.2227 & 0.0842 & 0.2957 \\
         & RMSE  & 0.8793 & 0.6268 & 0.1408 & 0.0358 & 0.1350 \\
\midrule
$n=1000$ & Mean  & 2.9015 & 1.1208 & 0.2300 & 0.1024 & 0.2358 \\
         & RMSE  & 0.8047 & 0.6487 & 0.1330 & 0.0265 & 0.1349 \\
\midrule
$n=2000$ & Mean  & 2.8021 & 1.1172 & 0.2437 & 0.0985 & 0.2821 \\
         & RMSE  & 0.7568 & 0.6898 & 0.1572 & 0.0204 & 0.1029 \\
\midrule
$n=5000$ & Mean  & 2.4024 & 0.8495 & 0.1689 & 0.1034 & 0.2429 \\
         & RMSE  & 0.8751 & 0.5269 & 0.1115 & 0.0145 & 0.0987 \\
\bottomrule
\end{tabular}
\end{table}

%%%%%%%%%%%%%%%%%%%%%%%%%%%%%%%%%%%%%%%%%%

\begin{figure}[H]
    \centering
    \includegraphics[width=\linewidth]{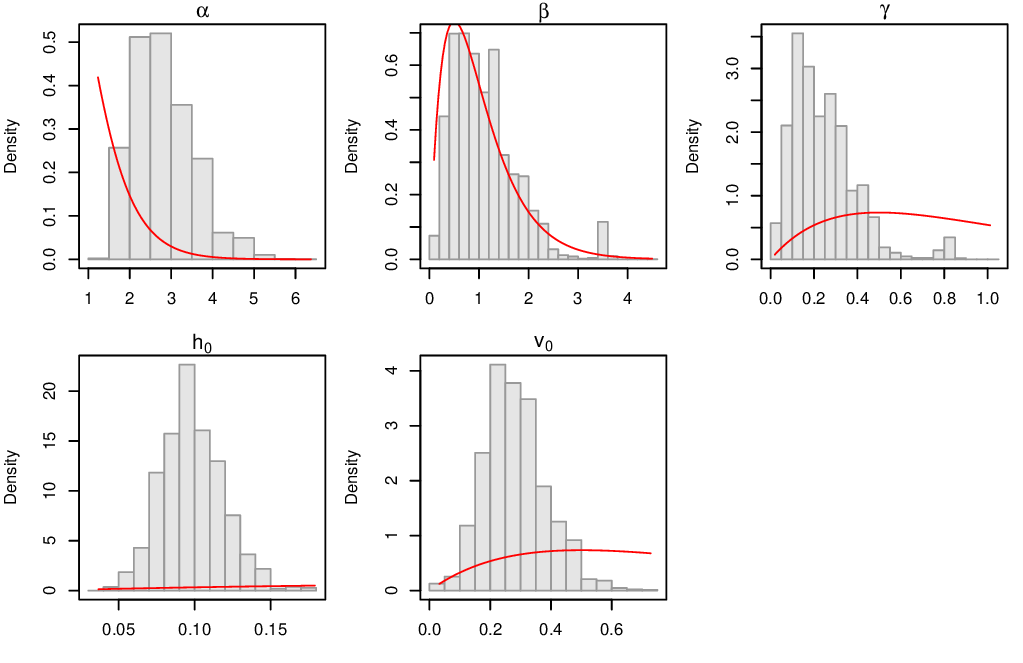}
    \caption{Posterior distributions (n=2000) for the overdamped hazard model parameters. The relevant part of the prior density is shown in red.}
    \label{fig:hist-o}
\end{figure}
%%%%%%%%%%%%%%%%%%%%%%%%%%%%%%%%%%%%%%%%%%

\subsection{Population dynamics–based hazard model}
\begin{table}[H]
\centering
\caption{Posterior results for the population dynamics-based hazard model. 
True parameter values are shown in parentheses.}
\label{tab:sinusoidal-ps}
\begin{tabular}{llccccc}
\toprule
 & & $r (0.8) $ & $\zeta (0.5)$ & $K (1)$ & $h_0 (0.1) $ & $v_0 (0.2)$ \\
\midrule
$n = 200$  & Mean  & 0.9774 & 0.9994 & 1.0240 & 0.0818 & 0.3728 \\
           & RMSE  & 0.7063 & 0.7218 & 0.4495 & 0.0452 & 0.2253 \\
\midrule
$n = 500$  & Mean  & 1.2490 & 0.7449 & 1.0382 & 0.0862 & 0.2479 \\
           & RMSE  & 0.9390 & 0.4499 & 0.4030 & 0.0315 & 0.1072 \\
\midrule
$n = 1000$ & Mean  & 1.2368 & 0.7126 & 0.9353 & 0.1046 & 0.1912 \\
           & RMSE  & 0.9311 & 0.4260 & 0.2457 & 0.0231 & 0.0642 \\
\midrule
$n = 2000$ & Mean  & 1.2947 & 0.7034 & 1.0430 & 0.0965 & 0.2201 \\
           & RMSE  & 0.9837 & 0.4189 & 0.1800 & 0.0175 & 0.0557 \\
\midrule
$n = 5000$ & Mean  & 1.0432 & 0.6564 & 0.9987 & 0.0950 & 0.2196 \\
           & RMSE  & 0.7000 & 0.3434 & 0.1185 & 0.0125 & 0.0386 \\
\bottomrule
\end{tabular}
\end{table}

%%%%%%%%%%%%%%%%%%%%%%%%%%%%%%%%%%%%%%%%%%
\begin{figure}[H]
    \centering
    \includegraphics[scale=1]{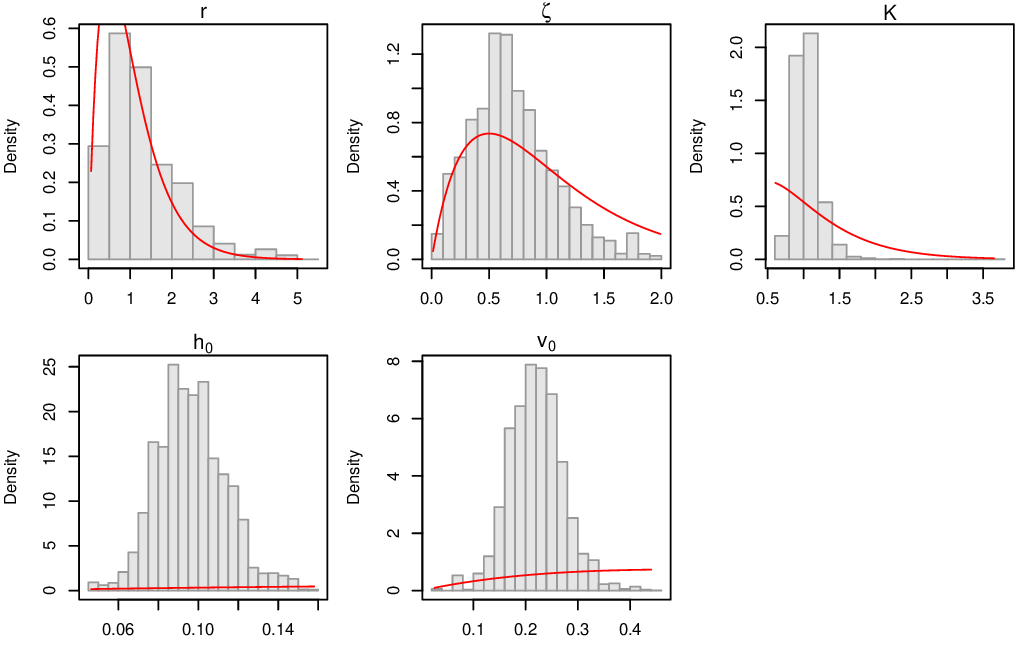}
    \caption{Posterior distributions (n=2000) for the population dynamics-based hazard model. The relevant part of the prior density is shown in red.}
    \label{fig:hist-expo0}
\end{figure}

\subsection{Sinusoidal hazard model}

%%%%%%%%%%%%%%%%%%%%%%%%%%%%%%%%%%%%%%%%%%

\begin{table}[H]
\centering
\caption{Posterior results for the sinusoidal hazard model. 
True parameter values are shown in parentheses.}
\label{tab:sinusoidal-ps}
\begin{tabular}{llcccc}
\toprule
 & & $\omega$ $(0.2\pi)$ & $c$ $(0.6)$ & $h_0$ $(0.1)$ & $v_0$ $(0.2)$ \\
\midrule
$n=200$  & Mean  & 0.7396 & 0.6326 & 0.1387 & 0.1641 \\
         & RMSE  & 0.1915 & 0.1444 & 0.0554 & 0.0608 \\
\midrule
$n=500$  & Mean  & 0.7568 & 0.5714 & 0.1028 & 0.1609 \\
         & RMSE  & 0.1565 & 0.0646 & 0.0230 & 0.0558 \\
\midrule
$n=1000$ & Mean  & 0.6928 & 0.6216 & 0.1048 & 0.1497 \\
         & RMSE  & 0.1114 & 0.0928 & 0.0184 & 0.0614 \\
\midrule
$n=2000$ & Mean  & 0.6723 & 0.6448 & 0.1056 & 0.1817 \\
         & RMSE  & 0.0741 & 0.0659 & 0.0140 & 0.0311 \\
\midrule
$n=5000$ & Mean  & 0.6432 & 0.6347 & 0.1035 & 0.1820 \\
         & RMSE  & 0.0396 & 0.0471 & 0.0092 & 0.0257 \\
\bottomrule
\end{tabular}
\end{table}

%%%%%%%%%%%%%%%%%%%%%%%%%%%%%%%%%%%%%%%%%%

\begin{figure}[H]
    \centering
    \includegraphics[scale=1]{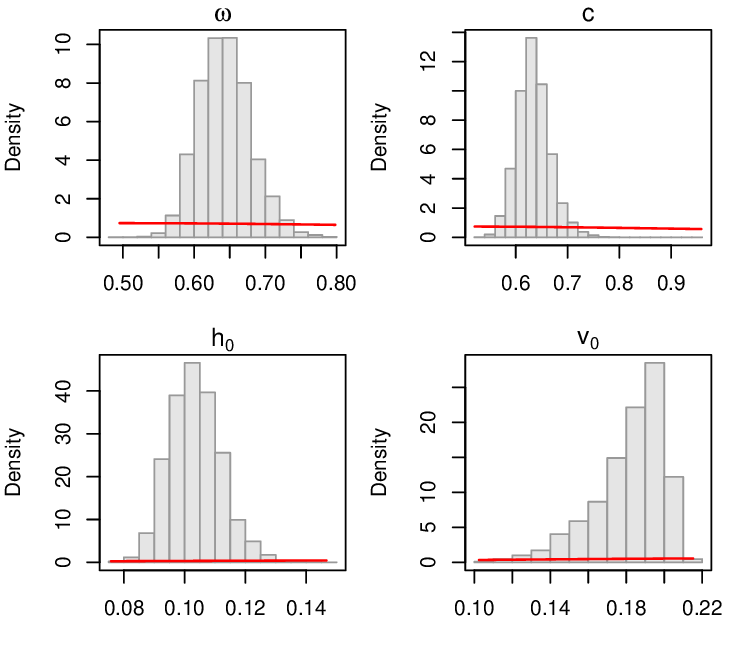}
    \caption{Posterior distributions (n=2000) for the sinusoidal hazard model parameters. The relevant part of the prior density is shown in red.}
    \label{fig:hist-sinusoidal}
\end{figure}
%%%%%%%%%%%%%%%%%%%%%%%%%%%%%%%%%%%%%%%%%%

\subsection{Exponential hazard model with interaction effects}

\begin{table}[H]
\centering
\label{tab:exp0-ps}
\caption{Posterior results for the exponential hazard model (when $\beta=0$). True parameter values are shown in parentheses.}
\begin{tabular}{llccc}
\toprule
 &  & $\alpha$ (0.1) & $h_0$ (0.4) & $v_0$ (0.1) \\
\midrule
$n = 200$  & Mean  & 0.1399 & 0.4415 & 0.0794 \\
           & RMSE  & 0.0693 & 0.0683 & 0.0387 \\
\midrule
$n = 500$  & Mean  & 0.1185 & 0.3873 & 0.0862 \\
           & RMSE  & 0.0399 & 0.0361 & 0.0268 \\
\midrule
$n = 1000$ & Mean  & 0.1186 & 0.3874 & 0.0941 \\
           & RMSE  & 0.0369 & 0.0283 & 0.0214 \\
\midrule
$n = 2000$ & Mean  & 0.1358 & 0.3952 & 0.1023 \\
           & RMSE  & 0.0456 & 0.0199 & 0.0200 \\
\midrule
$n = 5000$ & Mean  & 0.1056 & 0.3919 & 0.0980 \\
           & RMSE  & 0.0174 & 0.0148 & 0.0136 \\
\bottomrule
\end{tabular}
\end{table}

%%%%%%%%%%%%%%%%%%%%%%%%%%%%%%%%%%%%%%%%%%
\begin{figure}[H]
    \centering
    \includegraphics[scale=1]{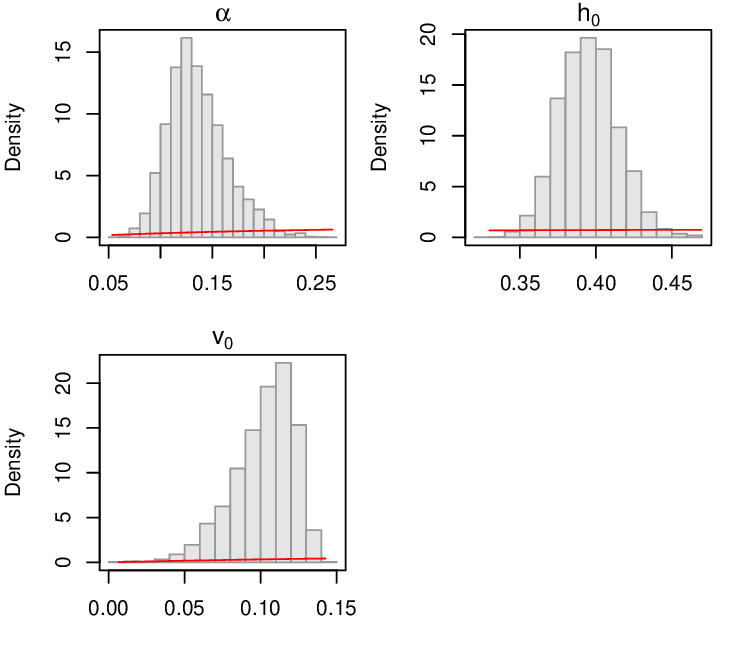}
    \caption{Posterior distributions (n=2000) for the exponential hazard model parameters (when $\beta=0$). The relevant part of the prior density is shown in red.}
    \label{fig:hist-expo0}
\end{figure}
%%%%%%%%%%%%%%%%%%%%%%%%%%%%%%%%%%%%%%%%%%

\end{document}